\newtheorem{theorem}{Theorem}[section]
\newtheorem{lem}[theorem]{Lemma}
\newtheorem{rem}[theorem]{Remark}
\newtheorem{thm}[theorem]{Theorem}
\newtheorem{cor}[theorem]{Corollary}
\newtheorem{pro}[theorem]{Proposition}
\newtheorem{ex}{Example}[section]
\tikzstyle{new style 0}=[fill=black, draw=black, shape=circle]
\tikzstyle{new edge style 0}=[-, draw={rgb,255: red,108; green,0; blue,108}]
\tikzstyle{new edge style 1}=[-, draw={rgb,255: red,173; green,0; blue,173}]
\tikzstyle{new edge style 2}=[-, draw={rgb,255: red,255; green,128; blue,0}]
\tikzstyle{new edge style 3}=[-, draw={rgb,255: red,74; green,1; blue,158}]
\tikzstyle{new edge style 4}=[-, draw={rgb,255: red,128; green,128; blue,128}]
\tikzstyle{new edge style 5}=[-, draw={rgb,255: red,64; green,64; blue,64}]
\tikzstyle{new edge style 6}=[-, draw={rgb,255: red,191; green,191; blue,191}]
\tikzstyle{new edge style 7}=[-, draw={rgb,255: red,191; green,255; blue,0}]
\tikzstyle{new edge style 8}=[-, draw=cyan]
\definecolor{violet}{RGB}{160, 0, 160}
\newcommand{\change}[1]{{\color{black}{#1}}}
\NewDocumentEnvironment{alignb}{b}{%
  \begin{align*}
  \refstepcounter{equation} #1 \tag{\theequation}
  \end{align*}
}{\ignorespacesafterend}
\newcommand{\dd}{\:\mathrm{d}}
\newcommand{\mN}{\mathbb{N}}
\newcommand{\mR}{\mathbb{R}}
\newcommand{\tin}{$t\in \mR_{\geq 0}$ }
\newcommand{\tinb}{$t\in \mR_{\geq 0}$, }
\newcommand{\ain}{$a \in \{1,\dots,d\}$ }
\newcommand{\DD}{\{1,\dots,d\}}
\newcommand{\mI}{\mathcal{I}}
\newcommand{\inti}{\int\limits_{\mI^d}}
\newcommand{\intii}{\inti\inti}
\newcommand{\intia}{\int\limits_{\mathcal{Q}}}
\newcommand{\intiai}{\intia\inti}
\newcommand{\intiaia}{\intia\intia}
\newcommand{\xya}{\phi_{xy\alpha}}
\newcommand{\yxe}{\phi_{yx\eta}}
\title{Multi-Dimensional Opinion Formation}
\author[1]{Hanna Bartel}
\author[2,1]{Martin Burger}
\author[3]{Marie-Therese Wolfram}
\affil[1]{Department of Mathematics, University of Hamburg, Bundesstr. 55, 20146 Hamburg, Germany.}
\affil[2]{Helmholtz Imaging, Deutsches Elektronen-Synchroton DESY, Notkestr. 85, 22607 Hamburg, Germany}
\affil[3]{Warwick Mathematics Institute, University of Warwick, Gibbet Hill Road, CV47AL Coventry, UK}
\date{}
\begin{document}
\maketitle

\begin{abstract}
In this paper we propose and investigate a multi-dimensional opinion dynamics model where people are characterised by both opinions and importance weights across these opinions. Opinion changes occur through binary interactions, with a novel coupling mechanism: the change in one topic depends on the weighted similarity across the full opinion vector. We state the kinetic equation for this process and derive its mean-field partial differential equation to describe the overall dynamics. Analytical computations and numerical simulations confirm that this model \change{exhibits a variety of qualitatively distinct stationary states,} and we demonstrate that the final opinion structures are critically determined by the people's opinion weights.
\end{abstract}

\section{Introduction}

There has been extensive research on opinion formation models in different scientific disciplines in the last decades. Most models focus on the dynamics of a single topic, and assume that opinions change through binary interactions with like-minded people. In this paper we propose a new mathematical model to describe the evolution of people discussing and changing their opinion on multiple related topics, for example considering the evolution of people's opinions on climate change, sustainable energy and vegetarianism, thereby providing a more realistic representation of opinion dynamics. In the proposed model, the change in one opinion depends on the closeness in all opinions as well as the individual rating of their importance. The proposed dynamics lead to the formation of \change{various stationary states beyond simple consensus}, which we will investigate using analytical and computational tools. 

Classical opinion formation models mostly focus on the evolution of a person's opinion, modelled by a continuous variable on a bounded interval, which changes due to interactions with others. In consensus formation, people average their opinion with others sufficiently close - this closeness can be measured in terms of the opinion distance, and possibly modulated by an underlying social network. The most famous works on consensus formation models  include the contributions of \citeauthor{hegselmannkrause} \cite{hegselmannkrause}, \citeauthor{deffuant2000mixing} \cite{deffuant2000mixing} and \citeauthor{degroot1974reaching} \cite{degroot1974reaching}.  
In the last decades, methods from statistical mechanics - in particular kinetic theory - have been proposed to analyse the overall dynamics of large interacting populations. These contributions go back to the seminal work of \citeauthor{toscani_kinetic_2006}, see \cite{toscani_kinetic_2006}, who first analysed the respective kinetic equations for the population distribution in suitable scaling limits. Boudin et al \cite{boudin_kinetic_2009} proposed a kinetic model for multi-dimensional opinion formation in the context of elections, each opinion corresponding to the support of a specific party.  Various generalisations and extensions of his ideas have been proposed and investigated in the literature, studying for example the impact of leaders \cite{albi_opinion_2016, during_boltzmann_2009}, underlying network structures \cite{bayraktar2023graphon,during2024breaking, nugent_evolving_2023} or exogenous shocks \cite{bondesan2024kinetic}. 

Multi-dimensional models for opinion formation received far less attention in research. So far, generalisations of the Hegselmann-Krause model \cite{hegselmannkrause} for multiple opinions have been studied in \cite{fortunato_vector_2005} and \cite{nedic_multidhk}. In these papers people interact if their opinions are sufficiently close and in case of an interaction all opinions are updated. Similarly, in \cite{pedraza_analytical_2021} \citeauthor{pedraza_analytical_2021} proposed a multi-dimensional model in which people only interact on some of the topics. None of these generalisations consider a weighting across opinions as proposed in this paper. 
An extension of DeGroot's model for consensus formation to the multi-dimensional setting was proposed and analysed in \cite{noipitak_dynamics_2021}. More general multi-dimensional opinion formation models, which include for example the effects of social networks or account for cognitive dissonance theory (which postulates that people do not have contradictory opinions on different topics), were considered in \cite{ojer_social_2025, rodriguez_collective_2016,schweighofer_agent-based_2020, schweighofer_weighted_2020, noorazar_classical_2020}. Solutions to these models exhibit complex dynamics, such as polarisation and ideology alignment.
\change{We note that a different multi-dimensional opinion formation model was published around the same time as this work, see \cite{doi:10.1137/25M1730818}. Here the authors start with the discrete in time version of bounded confidence models and base their interactions on a so-called discordance function. This discordance function corresponds the distance function with equal opinion weights considered in this work. Due to the time discrete nature of this recent work, the difference in the distance function, as well as the study in a distinct setting (multi-dimensional opinion formation on networks), the two works can be understood as complementary studying different scales and levels of generality of the same underlying phenomena}.

In this paper we propose and investigate a novel model for multi-dimensional opinion formation. Our main contributions can be summarised as follows:
\begin{enumerate}[nosep]
    \item Formulation of a multi-dimensional opinion formation model, which accounts for individual rating of importance (of a specific topic).
    \item Analysis of the respective mean-field model and first insights on the structure of stationary states.
    \item Confirmation (analytical and computational) that the proposed model leads to complex and more realistic stationary states.
\end{enumerate}
\vspace*{1em}
We start by presenting the underlying microscopic interaction rules and the respective kinetic model in Section \ref{sec:derivation}. Then we discuss existence and properties of solutions to the respective mean-field model in Section \ref{sec:analysis}. Section \ref{sec:statsol} focuses on the structure of stationary states. In Section \ref{sec:numerics} we illustrate the complex dynamics as well as stationary states with computational experiments and we conclude in Section \ref{sec:conclusion}.

\section{A kinetic model for multi-dimensional opinion formation}\label{sec:derivation}

In this section we follow the methodologies introduced in \cite{toscani_kinetic_2006, during_boltzmann_2009, pedraza_analytical_2021} to model the evolution of opinions in large interacting agent systems. The proposed model is based on the following assumptions: 
\begin{itemize}[nosep]
    \item People do not lie, and they know everyone else's current opinions.
    \item Topics are related indirectly via a distance at which people perceive each other, in particular a change in opinion in one topic does not trigger a change in opinion on any other topic.
    \item No exogenous factors are included (such as media or underlying social network structures).
\end{itemize}

For simplicity, we assume that people always discuss every topic in every interaction. We assume further that people are characterised by their opinions \change{$x\in\mR^d$} with $d \in\mN$ and their respective importance weights are $\alpha \in \mathcal{A}:=\{\alpha\in[0,1]^d\mid \sum_{a=1}^d \alpha_a=1\}$. Moreover, the parameter $\beta\in[0,1]$ weighs the importance of the currently considered opinion against the importance of the other opinions.  We define the distance on the $a$-th topic for two opinion vectors $x$ and $y$ in \change{$\mR^d$} as 
\begin{alignb}\label{eq:p_distance}
    p_a(x,y,\alpha):= \beta |x_a-y_a| + (1-\beta)\sum_{b=1}^d\alpha_b |x_b-y_b|.
\end{alignb}
Note that \eqref{eq:p_distance} is not a norm, since $p_a(x,y,\alpha) = 0$ does not imply that people share the same opinions. We assume that binary interactions between people can be described by an interaction function $\phi:[0,2] \mapsto [0,1]$, which depends on their distance in opinion as defined in \eqref{eq:p_distance}. The function $\phi$ is assumed to be non-increasing, accounting for the fact that people with similar opinions influence each other more than people further apart (a standard assumption in bounded confidence models).\\

We start by defining the binary interaction between two people with opinions and weights $(x,\alpha),(y,\eta)\in\mathcal{Q} := \change{\mR^d} \times \mathcal{A}$ and denote their post-interaction opinions by $x^*$ and $y^*$ respectively. They are given by
\begin{alignb}\label{eq:interactions}
    x^* &= x + \gamma \phi_{xy\alpha} \odot (y-x) \\
y^* &= y + \gamma \phi_{yx\eta} \odot (x-y). 
\end{alignb}
The parameter $\gamma\in (0,1)$ describes how strong interactions influence opinions, and $\odot$ denotes component-wise vector multiplication. The function $\phi_{xy\alpha}$ corresponds to the component-wise evaluation of the interaction function $\phi$, i.e.
\begin{align*}
    \phi_{xy\alpha} := \left(
\begin{array}{c}
\phi\left(p_1(x,y,\alpha)\right)\\
\phi\left(p_2(x,y,\alpha)\right)\\
\vdots\\
\phi\left(p_d(x,y,\alpha)\right)\\
\end{array}
\right).
\end{align*}
Note that $p_a(x,y,\alpha)=p_a(y,x,\alpha)$ and thus $\phi_{xy\alpha}=\phi_{yx\alpha}$. However, in general the interaction is not reciprocal due to the difference in $\alpha$ and $\eta$. In particular, this is a difference to $1$D models.\\
\begin{rem}\label{rem:x*y*inI}
    \change{Let $\mI\subset \mR$ be some bounded domain and} let $x,y\in\mI^d$. Clearly, for $x^*,y^*$ obtained via \eqref{eq:interactions}, it holds component-wise that $\min(x^*,y^*)\geq \min(x,y)$ and $\max(x^*,y^*)\leq\max(x,y)$, and moreover, $x^*,y^*\in\mI^d$. 
\end{rem}

Consider the distribution function $f=f(x,\alpha,t)$, which describes the ratio of people having opinions \change{$x\in\mR^d$} and importance weights $\alpha\in\mathcal{A}$ at time $t\in \mR_{\geq 0}$. To derive the corresponding mean-field model, we consider
\begin{alignb}\label{eq:dtgl}
\frac{\partial f}{\partial t} = 
G(f,f) - L(f,f),
\end{alignb}
where $G$ and $L$ are the gain and loss term respectively. 
Let now $\kappa$ denote the interaction rate.

\change{The gain term accounts for people changing their opinion to $x^*$ with pre-interaction opinion and weights $(x,\alpha)$. In particular
\begin{align*}
    G(f,f)(x', \alpha,t) =\kappa \intiai \delta\left(x' - x^*(x,y,\alpha)\right)\mathds{1}_{\lbrace x\neq x^{*}\rbrace} f(x,\alpha,t) f(y, \eta, t) \dd x \dd (y,\eta),
\end{align*}
where $x^*$ is given by \eqref{eq:interactions}. Similarly we define the loss, due to interactions of individuals with pre-interaction opinion and weight $(x,\alpha)$ with others having opinion and weights $(y,\eta)$ as 
\begin{align*}
    L(f,f) (x', \alpha,t) = \kappa \intiai \delta\left(x' - x\right) \mathds{1}_{\lbrace x\neq x^*\rbrace }f(x,\alpha,t) f(y, \eta, t) \dd x \dd (y,\eta).
\end{align*}
We see that the right hand side can be written as a collision operator in weak form. Let $\xi(x',\alpha)$ be a suitable test function, then 
\begin{align*}
\intia \Bigl(G(f,f) - L(f,f)\Bigr) \xi(x', \alpha)\dd(x', \alpha) = \kappa\intiaia\Bigl(\xi(x^*, \alpha)-\xi(x,\alpha)\Bigr) f(x,\alpha,t) f(y, \eta, t) \dd (x,\alpha)\dd(y, \eta).
\end{align*}
This corresponds to the well-known weak formulation of a collision operator, see \cite{toscani_kinetic_2006}.
}
\medskip

\change{Without loss of generality, we assume that $\kappa=1$. To derive the respective mean-field equation we write \eqref{eq:dtgl} in weak form. Let $\xi$ be a test function in $C_c^\infty(\mathcal{Q})$, then 
\begin{alignb}\label{eq:derivation_withtaylor}
    \frac{\dd}{\dd t}\intia & \xi(x,\alpha) f(x,\alpha,t) \dd (x,\alpha) = \intia \xi(x,\alpha)G(x,\alpha)) \dd (x,\alpha) - \intia \xi(x,\alpha) L(x,\alpha) \dd (x,\alpha)\\
    =&\intiaia\Bigl(\xi(x^*, \alpha)-\xi(x,\alpha)\Bigr) f(x,\alpha,t) f(y, \eta, t) \dd (x,\alpha)\dd(y, \eta)\\
    =& \intiaia \Bigl(\xi(x,\alpha) +\gamma\nabla_x \xi(x,\alpha) \cdot (\phi_{xy\alpha}\odot (y-x)) + \mathcal{O}(\gamma^2)-\xi(x,\alpha)\Bigr) f(x,\alpha,t)f(y,\eta,t) \dd (y,\eta) \dd (x,\alpha)\\
    =& \intiaia \Bigl(\gamma\nabla_x \xi(x,\alpha) \cdot (\phi_{xy\alpha}\odot (y-x)) + \mathcal{O}(\gamma^2)\Bigr) f(x,\alpha,t)f(y,\eta,t) \dd (y,\eta) \dd (x,\alpha).
\end{alignb}
Next, we want to compute the grazing collision limit, see \cite{toscani_kinetic_2006} for more details. In doing so we rescale time $\tau = \gamma t$ and define $g(x,\alpha,\tau)=f(x,\alpha,t)$. Then we obtain
\begin{align*}
    \frac{\dd}{\dd \tau}\intia &\xi(x,\alpha) g(x,\alpha,\tau) \dd (x,\alpha)\\
    =& \frac{1}{\gamma}\intiaia \Bigl(\gamma\nabla_x \xi(x,\alpha) \cdot (\phi_{xy\alpha}\odot (y-x)) + \mathcal{O}(\gamma^2)\Bigr) g(x,\alpha,\tau)g(y,\eta,\tau)y,\eta),\tau) \dd (y,\eta) \dd (x,\alpha)\\
    =&\intiaia \Bigl(\nabla_x \xi(x,\alpha) \cdot (\phi_{xy\alpha}\odot (y-x)) + \mathcal{O}(\gamma)\Bigr) g(x,\alpha,\tau)g(y,\eta,\tau)y,\eta),\tau) \dd (y,\eta) \dd (x,\alpha).
\end{align*}
In the limit $\gamma \rightarrow 0$ we obtain that
\begin{align*}
   \frac{\dd}{\dd \tau}\intia &\xi(x,\alpha) g(x,\alpha,\tau) \dd (x,\alpha) =    \intiaia \nabla_x \xi(x,\alpha) \cdot (\phi_{xy\alpha}\odot (y-x)) g(x,\alpha,\tau)g(y,\eta,\tau) \dd (y,\eta) \dd (x,\alpha).
\end{align*}

Returning to the previous notation, using the letter $f$ instead of $g$ and $t$ instead of $\tau$, gives
\begin{alignb}\label{eq:pdeweak}
   &\frac{\dd}{\dd t}\intia \xi(x,\alpha) f(x,\alpha,t) \dd (x,\alpha)
   =&\intiaia \nabla_x \xi(x,\alpha) \cdot (\phi_{xy\alpha}\odot (y-x)) f(x,\alpha,t)f(y,\eta,t) \dd (y,\eta) \dd (x,\alpha),
\end{alignb}
and the corresponding strong form - a Vlasov type equation:
\begin{alignb}\label{eq:pde} 
    \frac{\partial}{\partial t} f(x,\alpha,t) = - \nabla_x \cdot \left(\left(\intia  \phi_{xy\alpha}\odot (y-x) f(y,\eta,t)\dd (y,\eta)\right) f(x,\alpha,t)\right).
\end{alignb}
Notice that since we consider the equation on $\mR^d$, no boundary terms arise in the derivation of the strong form of the PDE.}

\begin{rem}
Throughout this paper we will sometimes consider the special case of all people having the same opinion weights. For clarity, in that case we will use the notation $\rho(x,t)$ instead of $f(x,\alpha,t)$. When all people have the same opinion weights, equation \eqref{eq:pdeweak} simplifies to
   \begin{alignb}\label{eq:pdeweak_fixalp}
    \frac{\dd}{\dd t}\inti \xi(x) \rho(x,t) \dd x
    =&\intii \nabla_x \xi(x) \cdot (\phi_{xy}\odot (y-x)) \rho(x,t)\rho(y,t) \dd y \dd x
\end{alignb}
in the weak formulation, and
   \begin{align*}
    \frac{\partial}{\partial t} \rho(x,t) = - \nabla_x \cdot \left(\left(\inti  \phi_{xy}\odot (y-x) \rho(y,t)\dd y\right) \rho(x,t)\right)
\end{align*} 
in the strong formulation.
\end{rem}

\begin{rem}
In \eqref{eq:interactions}, for simplicity, we assume that all opinions of a person change in an interaction. This is not necessarily a realistic assumption, as people often discuss a single topic only. To account for this one can consider the following modification of \eqref{eq:interactions}. Let $\nu$ be a uniformly distributed random variable that takes values in $\DD$. 
\begin{align}
\begin{split}
\label{eq:singleopinteraction}
    x_{\nu}^* &= x_{\nu} + \gamma \phi_{xy\alpha}  \odot (y-x) \\
    y_{\nu}^* &= y_{\nu} + \gamma \phi_{yx\eta} \odot (x-y)\\
    x_{\zeta}^* &= x_{\zeta},\, y_{\zeta}^* = y_{\zeta} \qquad \qquad \qquad \text{for all } \zeta \in \DD \backslash \lbrace \nu \rbrace.
    \end{split}
\end{align}
In \eqref{eq:singleopinteraction} a single opinion is randomly selected and people change their opinion in this component only, leaving the others unchanged. In the mean-field limit \eqref{eq:singleopinteraction} leads to a rescaling in time, in particular
\begin{align*} 
    \frac{\partial}{\partial t} f(x,\alpha,t) = - \frac{1}{d}\nabla_x \cdot \left(\left(\intia  \phi_{xy\alpha}\odot (y-x) f(y,\eta,t)\dd (y,\eta)\right) f(x,\alpha,t)\right).
\end{align*}
\change{The full derivation for this modification can be found in the appendix in Section \ref{app:onlyonetopicperinter}.}
\end{rem}

\subsection{Characteristics at boundary}\label{sec:boundary_behabvior}
\change{In the following we would like to consider \eqref{eq:pde} on the bounded domain $\mI^d:=[-1,1]^d$, but we did the derivation on $\mR^d$. We have already seen in Remark \ref{rem:x*y*inI}, that post-interaction opinions are bounded by the respective pre-interaction opinions. Next we will show that if the support of $f$ is on a bounded domain at time $t$, it remains within this bounded domain for all times $t^*>t$.
For this, we assume that $\text{supp}(f(.,t))\subseteq \mI^d$. Let $(x,\alpha)$ be a boundary point of the hypercube $\mI^d$. Then there exist two sets, $\mathcal{B}_+,\mathcal{B}_-\subseteq \{1,\dots,d\}$, $\mathcal{B}_+\cap \mathcal{B}_-=\emptyset$ and $\left(\mathcal{B}_+\neq \emptyset \vee \mathcal{B}_-\neq \emptyset\right)$, such that 
\begin{align*}
    x_a &= \begin{cases}
        1 \quad & \text{ for all } a \in \mathcal{B}_+\\
        -1 \quad & \text{ for all } a \in \mathcal{B}_-.
    \end{cases}
\end{align*}
We define the outer unit normal vector at $x$ as
\begin{align*}
    n_a = \begin{cases}
\frac{1}{\lvert n \rvert} \quad & \text{ for all } a \in \mathcal{B}_+\\
-\frac{1}{\lvert n \rvert} \quad & \text{ for all } a \in \mathcal{B}_-\\
0  \quad & \text{ for all } a\in\{1,\dots,d\}\backslash(\mathcal{B}_+\cap \mathcal{B}_-).
    \end{cases}
\end{align*}
Note that we choose one possible outward normal vector at corners of $\mI^d$. The following computation shows that the characteristics point inwards, i.e.
\begin{align*}
        &\Bigl(\int_{\mR^d \times \mathcal{A}}  \phi_{xy\alpha}\odot (y-x) f(y,\eta,t\dd (y,\eta)\Bigr)\cdot n=\left(\int_{\mI^d \times \mathcal{A}}  \phi_{xy\alpha}\odot (y-x) f(y,\eta,t)\dd (y,\eta)\right)\cdot n\\
        &\qquad=\sum_{a=1}^d \int_{\mR^d \times \mathcal{A}}  \phi_{xy\alpha_a}(y_a-x_a) f(y,\eta,t)\dd (y,\eta) n_a\\
        &\qquad=\sum_{a\in\mathcal{B}_+} \int_{\mR^d \times \mathcal{A}}  \phi_{xy\alpha_a}(y_a-x_a) f(y,\eta,t)\dd (y,\eta) + \sum_{a\in\mathcal{B}_-} \int_{\mR^d \times \mathcal{A}}  \phi_{xy\alpha_a}(y_a-x_a) f(y,\eta,t)\dd (y,\eta)  \\ 
        &\qquad + \sum_{a\in\{1,\dots,d\}\backslash(\mathcal{B}_+\cap \mathcal{B}_-)} \int_{\mR^d \times \mathcal{A}}  \phi_{xy\alpha_a}(y_a-x_a) f(y,\eta,t)\dd (y,\eta)\\
        &\qquad \leq 0,
\end{align*}
where we used the definition of $n_{\alpha}$ in the last inequality.
Hence, our the support of $f$ remains in $\mI^d$, and we do not have to impose a boundary condition on \eqref{eq:pde} when considering it on $\mI^d$. Thus, from now on we investigate our model on $\mI^d$ and we will redefine $\Omega$ as $\Omega:=\mI^d\times\mathcal{A}$ for the space of opinion vectors and opinion weights.}

\section{Global in time existence and properties of solutions}\label{sec:analysis}
In this section we discuss existence of solutions to \eqref{eq:pde} and their properties. We use the \change{ notations $\mathcal{M}(\mathcal{Q})$ and $\mathcal{P}(\mathcal{Q})$ for the spaces of Radon measures respectively probability measures on $\mathcal{Q}$. With slight abuse of notation we still write $f(x,\alpha,t)$ meaning a measure in the first two quantities.}

\subsection{Global in time existence}

We use the Picard Lindel\"of Theorem to show existence of solutions to \eqref{eq:pdeweak}. Before doing so, we make the following assumption;
\begin{enumerate}[label={\textbf{(A\arabic*)}}]
    \item\label{it:lip} $\phi:[0,2]\to[0,1]$ is Lipschitz continuous with Lipschitz constant $L\in \mR_{\geq 0}$.
\end{enumerate}
The classic interaction function in bounded confidence models, introduced in \cite{hegselmannkrause}, is $\phi(s) = \mathds{1}_{s \leq R}(s)$ for a given $R \in \mathbb{R}_{\geq 0}$. This function is however not Lipschitz continuous, therefore violating 
Assumption \ref{it:lip}. We can consider a smoothed version, first suggested in \cite{AN2025}, of the following form

    \begin{alignb}\label{eq:phismooth}
    \phi\left(r\right)=\begin{cases} 
    1 & \text{if } r\leq r_1 \\
    q\left(\frac{r_{2}-r}{r_{2}-r_{1}}\right) & \text{if } r_1<r<r_2 \\
    0 & \text{if } r_2\leq r,\end{cases}\end{alignb} 
    
    for $q\left(s\right)=\frac{s^{2}}{s^{2}+\left(1-s\right)^{2}}$ and some $r_1, r_2\in(0,2)$ with $r_1<r_2$. A straightforward calculation shows that $\phi$ as defined in \eqref{eq:phismooth} is indeed Lipschitz continuous.

Now let us consider the characteristic curve in opinion space denoted by $X:\mI^d\times \mathcal{A}\times R_{\geq 0} \to \mI^d\times \mathcal{A}, (x_0,\eta, t) \to (x,\alpha)$, which describes the opinion vector of people with initial opinion vector $x_0$ and opinion weightings $\alpha$ at time $t$. By considering the derivative of $X$ along a characteristic, we obtain for $T\in \mR_{>0}$ and $t\in[0,T]$
\change{
\begin{alignb}\label{eq:odeX}
        \frac{\partial}{\partial t} X(x_0,\alpha,t) =& \left( \intia  \phi_{X(x_0,\alpha,t)_x y X(x_0,\alpha)_\alpha}\odot (y-X(x_0,\alpha,t)_x) \dd f(y,\eta,t),0  \right)\\
        =& \left( \intia  \phi_{X(x_0,\alpha,t)_x X(y_0,\eta,t)_x X(x_0,\alpha,t)_\alpha}\odot (X(y_0,\eta,t)_x-X(x_0,\alpha,t)_x) \dd f_0(y_0,\eta),0\right) \\
        =&: u(X(x_0,\alpha,t),\alpha,t),
\end{alignb}
where $f_0$ is the push forward measure of $f$ by $X^{-1}$ (the latter being well-defined by the solution of the ODEs backward in time). }

Notice that $u:\mI^d\times \mathcal{A} \to \mR$ is continuous in $t$ (since it only depends on it via the differentiable function $X$) and Lipschitz continuous in $(x,\alpha)$ if $\phi$ is Lipschitz continuous. The following lemma formalizes this observation, the proof follows by direct estimates.
\begin{lem}\label{lem:u_lipcont}
    Let $f \in L^\infty(0,T;\mathcal{M}(\mathcal{Q}))$, and let \ref{it:lip} hold. Then, \begin{align*}
        &u:\mI^d\times \mathcal{A} \times \mR_{\geq 0} \to \mR,\\ &(x,\alpha,t)\to \intia  \phi_{xy\alpha}\odot (y-x)  \dd f(y,\eta,t)
    \end{align*} is Lipschitz continuous \change{with respect to the $l_1$-norm} in $(x,\alpha)$, with a uniform Lipschitz constant in time. In particular the weak divergence $\nabla \cdot u$ is bounded almost everywhere.
\end{lem}

\change{
We aim to prove a standard existence result for weak solutions via the methods of characteristics. For this sake we prove existence and uniqueness of the ODE system for $X$ in the Banach space of continuous functions $\mathcal{C}(\Omega)$  with $\Omega = \mathcal{I}^d \times \mathcal{A}$, subsequently using the push-forward of the initial measure. For this sake we first need to verify that the map from $X$ to  the right-hand side in  \eqref{eq:odeX} is indeed Lipschitz-continuous in $\mathcal{C}(\Omega)$ and continuous in time. Since there is no explicit time-dependence and the second component is trivial, this reduces to the Lipschitz-continuity of the first component, which can be verified by straight-forward estimates: 
\begin{lem}\label{lem:uX_lipcont}
    Let $f_0 \in \mathcal{M}(\mathcal{Q})$, and let \ref{it:lip} hold. Then, the map
    \begin{align*}
        &\mathcal{U}: C(\Omega) \rightarrow C(\Omega),\\ &X \mapsto \intia  \phi_{X(\cdot)_x X(y_0,\eta)_x X(\cdot)_\alpha}\odot (X(y_0,\eta)_x-X(\cdot)_x) \dd f_0(y_0,\eta)
    \end{align*} is well-defined and Lipschitz continuous.
\end{lem}
}

\change{\begin{thm}
    Let $f_0 \in  \mathcal{M}(\mathcal{Q})$ and let $\phi$ be Lipschitz continuous. Then, there exists a  unique solution $f \in L^\infty(0,T;\mathcal{M}(Q))$to \eqref{eq:pdeweak} with initial condition $f_0$.
\end{thm}
\begin{proof}
    Since $\phi$ is Lipschitz continuous, we can use Lemma \ref{lem:uX_lipcont} and, thus, apply Picard-Lindel\"{o}f's Theorem \cite[Theorem 8.13]{kelley_theory_2010} on $\mathcal{C}(\Omega)$ to \eqref{eq:odeX} from which we get existence and uniqueness of a solution continuos in time $X((x_0,\alpha),\cdot)$ to \eqref{eq:odeX} for any $(x_0,\alpha)\in\Omega$. By defining $f$ as the push forward measure along the characteristics, we know that is preserves non-negativity and conserves mass. And by existence and uniqueness of the characteristics it follows that $f$ exists and is unique.
\end{proof}
}

\change{
Let us mention that under our assumptions, the measure $f$ is indeed a solution of the continuity equation with vector field having bounded divergence. Thus, we may even conclude with standard results \cite{diperna1989ordinary} that $f$ is renormalized solutions and $p$-integrability of the solution is preserved in time. For simplicity and brevity we do not further dive into this topic.}

We continue by showing that that \eqref{eq:pdeweak} is non-negativity preserving and mass conserving, which are two properties important for probability measures.
\paragraph{Conservation of mass} Let $f$ be a solution to \eqref{eq:pdeweak}, then the total mass is preserved, i.e. $\frac{\dd}{\dd t}\intia f(x,\alpha,t) \dd (x,\alpha)=0$. This follows by using the weak formulation of our PDE \eqref{eq:pdeweak} with test function $\xi\equiv 1$.
\paragraph{Non-negativity of solutions}
\change{The non-negativity of the solution follows directly from the existence and uniqueness of the characteristics and the definition of the solution as push-forward measure of the initial condition along the characteristics.}


\paragraph{}From the existence of unique solutions along characteristics, the mass conservation and the non-negativity we obtain the following theorem:

\begin{thm}
    Let \ref{it:lip} hold. For any initial condition $f_0\in\mathcal{P}(\mathcal{Q})$, there exists a unique solution $f\in\mathcal{C}([0,T];\mathcal{P}(\mathcal{Q}))$ to \eqref{eq:pdeweak}.
\end{thm}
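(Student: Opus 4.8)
The plan is to combine the three ingredients the paper has just established --- (i) existence and uniqueness of the characteristic flow $X((x_0,\alpha),\cdot)$ from the Picard--Lindel\"of argument applied to \eqref{eq:odeX}, (ii) conservation of mass along \eqref{eq:pdeweak}, and (iii) non-negativity along characteristics --- into a fixed-point/flow construction of the solution and a continuity argument. First I would fix $T\in\mR_{>0}$ and set up the map that sends a measure-valued curve $t\mapsto f(\cdot,t)$ to the push-forward $t\mapsto (X(\cdot,t))_\# f_0$, where $X$ solves \eqref{eq:odeX} with velocity field $u$ built from that same curve; the Lemma gives the Lipschitz bound on $u$ uniformly in $t$, so this map is well defined on $\mathcal{C}([0,T];\mathcal{P}(\mathcal{Q}))$ equipped with (say) the uniform-in-time $1$-Wasserstein distance. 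A standard Dobrushin-type estimate --- Gronwall on $W_1$ using the Lipschitz constant $d\max(2L+1,4L)$ from the Lemma --- shows this map is a contraction for $T$ small, hence has a unique fixed point, and since the Lipschitz constant does not depend on the solution, the local time $T$ is uniform and one iterates to reach all of $[0,\infty)$.

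Next I would verify that the fixed point is actually a weak solution of \eqref{eq:pdeweak}: this is the usual computation that the push-forward of $f_0$ under the flow of a Lipschitz velocity field satisfies the continuity (transport) equation in the weak sense. Concretely, for $\xi\in C_c^\infty(\mathcal{Q})$ one differentiates $\intia \xi(x,\alpha)\dd f(x,\alpha,t) = \intia \xi(X((x_0,\alpha),t))\dd f_0(x_0,\alpha)$ under the integral, uses $\frac{\dd}{\dd t}X = u(X)$ from \eqref{eq:odeX}, and recognises the right-hand side as exactly the bilinear form in \eqref{eq:pdeweak} after rewriting $u(X((x_0,\alpha),t))$ back in Eulerian form as $\intia \phi_{xy\alpha}\odot(y-x)\dd f(y,\eta,t)$. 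The boundary computation in Section \ref{sec:boundary_behabvior} (or directly Remark \ref{rem:x*y*inI} at the micro level) guarantees the flow maps $\mathcal{Q}$ into itself, so no boundary terms appear and the solution stays supported in $\mathcal{Q}$; mass conservation (test function $\xi\equiv 1$) together with non-negativity along characteristics then ensures $f(\cdot,t)\in\mathcal{P}(\mathcal{Q})$ for every $t$, so the curve indeed lies in $\mathcal{C}([0,T];\mathcal{P}(\mathcal{Q}))$.

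Uniqueness at the level of \eqref{eq:pdeweak} follows from the same contraction estimate: any two weak solutions with the same initial datum generate, via \eqref{eq:odeX}, velocity fields that are Lipschitz with the common constant, and the $W_1$-Gronwall inequality forces them to coincide on $[0,T]$, hence on $[0,\infty)$ by continuation. I would also record continuity in time of $t\mapsto f(\cdot,t)$ in $W_1$, which is immediate from $W_1(f(\cdot,t),f(\cdot,s))\le \sup_{(x,\alpha)}|X((x,\alpha),t)-X((x,\alpha),s)|$ and the uniform bound $|u|\le \mathrm{diam}(\mI^d)$ on the velocity, giving in fact Lipschitz-in-time regularity.

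The main obstacle is the contraction/stability estimate in Wasserstein distance: one must control $W_1\big((X^1_t)_\# f^1_0,(X^2_t)_\# f^2_0\big)$ by the flow-map discrepancy $\sup|X^1_t-X^2_t|$, and then close a Gronwall loop in which the difference of velocity fields $u^1-u^2$ is bounded both by the Lipschitz-in-$(x,\alpha)$ constant of the Lemma (applied to the transported points) \emph{and} by a term measuring the difference of the two measures $f^1_t, f^2_t$ in $W_1$ --- this second bound needs the observation that $(x,\alpha)\mapsto\phi_{xy\alpha}\odot(y-x)$ is bounded and Lipschitz uniformly in $(y,\eta)$, so that integrating against $f^1_t-f^2_t$ is controlled by $W_1(f^1_t,f^2_t)$. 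Everything else --- mass conservation, non-negativity, the boundary behaviour, the weak-formulation check --- has essentially been done or indicated already in the excerpt and only needs to be assembled.
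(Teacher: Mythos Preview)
Your proposal is correct and follows essentially the same strategy as the paper --- combine the characteristic flow from \eqref{eq:odeX}, mass conservation, and non-negativity to conclude --- but you are considerably more careful than the paper itself. The paper's proof is a two-line summary that simply invokes Picard--Lindel\"of on \eqref{eq:odeX} and then cites mass conservation and non-negativity; it does not spell out the fixed-point/Dobrushin contraction in $W_1$ that you outline, nor the explicit verification that the push-forward solves \eqref{eq:pdeweak}. In other words, you have filled in precisely the circularity (the velocity $u$ depends on $f$, which is itself the push-forward under the flow generated by $u$) that the paper leaves implicit when it applies Picard--Lindel\"of as if $u$ were a given external field.
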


\subsection{Evolution of the moments}
\change{In this subsection we investigate the evolution of the mean and variance of solutions. We consider the evolution for either different or equal importance weights first, and illustrate our results with examples at the end.}

\subsubsection{Evolution of the mean} We recall the definition of the mean opinion
\begin{align}\label{eq:mu}
\mu(t) := \intia xf(x,\alpha,t) \dd (x,\alpha).
\end{align}
\change{By using the weak formulation of the PDE \eqref{eq:pdeweak} with $\xi(x,\alpha)=x_a$ for any $a \in \{1,\dots,d\}$, we obtain}
\begin{align*}
    \frac{\dd}{\dd t}\intia x_a f(x,\alpha,t) \dd(x,\alpha)
=&\intiaia (\nabla_x x_a)
\cdot (\phi_{xy\alpha}\odot (y-x)) f(x,\alpha,t)f(y,\eta,t) \dd (y,\eta) \dd (x,\alpha)\\
=&\intiaia \phi_{{xy\alpha}_a} (y_a-x_a) f(x,\alpha,t)f(y,\eta,t) \dd (y,\eta) \dd (x,\alpha)\\
=&\intiaia  \phi_{{xy\alpha}_a} y_a f(y,\eta,t) \dd (y,\eta)f(x,\alpha,t) \dd (x,\alpha)\\ &-\intiaia  \phi_{{xy\alpha}_a} x_a f(x,\alpha,t) \dd (x,\alpha)  f(y,\eta,t) \dd (y,\eta).
\end{align*} 
\change{We see that the mean and the total mean are in general not preserved. In the special case of everyone having the same importance weights, the mean is preserved since the above calculation reduces to}
\begin{alignb}\label{eq:mean_fix_alpha}
    \frac{\dd}{\dd t}\inti x_a \rho(x,t) \dd x
=&\intii  \phi_{{xy}_a} y_a \rho(y,t) \dd y \rho(x,t) \dd x -\intii  \phi_{{xy}_a} x_a \rho(x,t) \dd x  \rho(y,t) \dd y \\
=&\intii \phi_{{xy}_a} y_a \rho(y,t) \dd y \rho(x,t) \dd x- \intii  \phi_{{yx}_a} x_a \rho(x,t) \dd x \rho(y,t) \dd y \\
=&0.
\end{alignb}
\subsubsection{Evolution of the variance}\label{sec:standdev} 
Next recall the definition of the variance
\begin{alignb}\label{eq:standdev}
v(t):=\intia |x-\mu(t)|^2f(x,\alpha,t) \dd (x,\alpha).
\end{alignb}
The time derivative of the variance can be written as 
\begin{align*}
\frac{\dd v(t)}{\dd t}
=&\intiaia \begin{pmatrix}x - \mu(t),y- \mu(t)\end{pmatrix}\Phi_{x\alpha z\eta}\begin{pmatrix}x - \mu(t)\\ y - \mu(t)\end{pmatrix} f(y,\eta,t)f(x,\alpha,t) \dd (x,\alpha) \dd (y,\eta),
\end{align*}
with
\begin{align*}
    \Phi_{x\alpha z\eta}:=\begin{pmatrix}
    -\phi_{xy\alpha} & \frac{\phi_{xy\alpha}+\phi_{yx\eta}}{2} \\
    \frac{\phi_{xy\alpha}+\phi_{yx\eta}}{2} & -\phi_{yx\eta} \\
    \end{pmatrix}.
\end{align*}
The eigenvalues of $\Phi_{x\alpha z\eta}$ are given by 
\begin{align*}
    \lambda_{\pm}
    =&-\frac{\xya+\yxe}{2}\pm \sqrt{\frac{\xya^2+\yxe^2}{2}}. 
\end{align*}
Since $\xya \geq 0$ and $\yxe\geq 0$, the smaller eigenvalue is non-positive since
\begin{align*}
    \lambda_{-}=-\frac{\xya+\yxe}{2}- \sqrt{\frac{\xya^2+\yxe^2}{2}}\leq 0,
\end{align*}
with equality if and only if $\xya=\yxe=0$.
It is straightforward to see that $\lambda_+$ is non-negative and that it is equal to $0$ if and only if $\xya=\yxe$.
Thus, $\Phi_{x\alpha z\eta}$ is negative semi definite if and only if $\xya=\yxe$. Note that if all people have the same importance weights, this condition is satisfied at all points $x,y\in \mI^d$, while it is violated in general when people have different importance weights.\\

\change{If everyone has the same} importance weights $\alpha\in\Omega$, then $\xya=\phi_{yx\alpha}$ for all $x,y\in\mI^d$ and thus, from the computations above, it follows that the variance does not increase over time. We can also compute this via
\begin{alignb}\label{eq:standdev_fixalpha}    
 \frac{\dd}{\dd t}v(t)=&\frac{\dd}{\dd t}\inti |x-\mu|^2\rho(x,t) \dd x\\
 =&\intii 2(x-\mu) \cdot ((\phi_{xy}\odot (y-x)) \rho(x,t)\rho(y,t) \dd y \dd x\\
        =&2\intii (x - \mu)\cdot (\phi_{xy}\odot (y- \mu)) \rho(x,t)\rho(y,t) \dd y \dd x\\
        &-2\intii (x - \mu)\cdot (\phi_{xy}\odot (x- \mu)) \rho(x,t)\rho(y,t) \dd y \dd x\\
        =& -\intii \sum_{a=1}^d((x_a-\mu_a) - (y_a-\mu_a))^2\phi_{xy_a} \rho(x,t)\rho(y,t) \dd y \dd x\\
        =& -\intii \sum_{a=1}^d(x_a - y_a)^2\phi_{xy_a} \rho(x,t)\rho(y,t) \dd y \dd x\\
        \leq& 0.
\end{alignb}
\change{Note that this calculation will turn out to be a useful expression in the investigation of the stationary solutions later.}
In the special case $\phi \equiv 1$, the following calculation shows that $v$ decreases exponentially 
\begin{alignb}\label{eq:std_phi=1}
    \frac{\dd}{\dd t} v(t) 
    =&\frac{\dd}{\dd t}\intia |x-\mu|^2 f(x,\alpha,t) \dd (x,\alpha)\\
    =&\intiaia 2(x-\mu) \cdot (y-x) f(x,\alpha,t)f(y,\eta,t) \dd (y,\eta) \dd (x,\alpha)\\
=& 2\intiaia (x - \mu)\cdot (y- \mu) f(x,\alpha,t)f(y,\eta,t) \dd (y,\eta) \dd (x,\alpha)\\
&- 2\intiaia (x - \mu)\cdot (x- \mu) f(x,\alpha,t)f(y,\eta,t) \dd (y,\eta) \dd (x,\alpha)\\
=& 2\intia (x - \mu) f(x,\alpha,t) \dd (x,\alpha)\cdot \intia (y- \mu) f(y,\eta,t) \dd (y,\eta)\\
&- 2\intia (x - \mu)\cdot (x- \mu) f(x,\alpha,t) \intia f(y,\eta,t) \dd (y,\eta) \dd (x,\alpha)\\
=& 2(\mu - \mu)\cdot (\mu - \mu)- 2\intia (x - \mu)\cdot (x- \mu) f(x,\alpha,t)\dd (x,\alpha)\\
=& -2v(t).
\end{alignb}

\change{\begin{rem}
    It is also possible to consider a mean and variance which depend on $\alpha$, i.e.
    \begin{align}\label{eq:tildemu}
\Tilde{\mu}(t,\alpha) := \inti xf(x,\alpha,t) \dd x,
    \end{align}
    and
    \begin{align}\label{eq:tildev}
        \Tilde{v}(t,\alpha):=\inti |x-\Tilde{\mu}(t,\alpha)|^2f(x,\alpha,t) \dd x.
    \end{align}
In the case of people having the same opinion weights, these quantities coincide with the original definition of the mean and variance. When people have different opinion weights, $\Tilde{\mu}(t,\alpha)$ can change as we will see in Example \ref{ex:diffmean}.
\end{rem}}

\change{
We recall that in the case of different importance weights
the variance can increase. We conclude by presenting explicit examples of such cases.
\begin{ex}\label{counterexample_standartdev}
    Let $d=2$ and consider a distribution of the form 
    \begin{align}  
    \label{eq:ex1}
      f(x,\alpha,t=T)=&\frac{1}{30}\delta_{((-\frac{5}{6},1),(\frac{4}{5},\frac{1}{5}))}(x,\alpha)+\frac{1}{30}\delta_{((-1,-1),(\frac{1}{2},\frac{1}{2}))}(x,\alpha)+\frac{14}{15}\delta_{((1,-1),(\frac{1}{2},\frac{1}{2}))}(x,\alpha).  
\end{align}
Set $\beta=\frac{1}{2}$ and choose a smoothed bounded confidence function $\phi(r)$ with $r_1=\frac{2}{5}$ and $r_2=\frac{1}{2}$ in \eqref{eq:phismooth}.
Then, \begin{align*}
\phi_{(-\frac{5}{6},1),(-1,-1),(\frac{4}{5},\frac{1}{5})}=(1,0)
\end{align*}
and \begin{align*}\phi_{(\frac{5}{6},1),(1,-1),(\frac{4}{5},\frac{1}{5})}=\phi_{(1,-1),(\frac{5}{6},1),(\frac{1}{2},\frac{1}{2})}=\phi_{(-1,-1),(1,-1),(\frac{1}{2},\frac{1}{2})}=\phi_{(1,-1),(-1,-1),(\frac{1}{2},\frac{1}{2})}=(0,0).\end{align*}
Thus,
\begin{align*}  
    \left.\frac{\dd}{\dd t}\intia x_1 f(x,\alpha,t) \dd(x,\alpha)\right|_{t=T}
    =&\intiaia  \phi_{{xy\alpha}_1} (y_1-x_1) f(y,\eta,T) \dd (y,\eta)f(x,\alpha,T) \dd (x,\alpha)\\
    =&\frac{1}{900}\left(\phi_{(-\frac{5}{6},1),(-1,-1),(\frac{4}{5},\frac{1}{5})_1} \left(-1+\frac{5}{6}\right)\right)\\
    =&-\frac{1}{5400}\neq0,
\end{align*}
which shows that when people have different importance weights $\alpha$, the mean changes in time. Note that in the above example only the mean in $x_1$ is changing, while the mean in $x_2$ is constant and thus, also the total mean changes. \\
Moreover, the variance is increasing since
\begin{align*}
       -\frac{5}{6}-\mu_1(T)=& -\frac{5}{6}-\intia x_1f(x,\alpha,T) d(x,\alpha)=-\frac{5}{6} -\frac{157}{180}<0,
\end{align*}
and thus $\left.\frac{\dd}{\dd t}  v(t)\right|_{t=T}
    = \frac{-1}{2700}(-\frac{5}{6}-\mu_1(T))>0.$
A  simulation of the respective dynamics is shown in Figure \ref{fig:varalp}.
\end{ex}}
\change{
\begin{ex}\label{ex:diffmean}
    If we consider the $\alpha$ dependent mean and variance, given by \eqref{eq:tildemu} and \eqref{eq:tildev}, we observe a similar increase as in Example \ref{counterexample_standartdev}. In particular
\begin{align*}  
    \left.\frac{\dd}{\dd t} \Tilde{\mu}_1\left(t, \left(\frac{4}{5},\frac{1}{5}\right)\right)\right|_{t=T}
    =&\left.\frac{\dd}{\dd t}\inti x_1 f\left(x,\left(\frac{4}{5},\frac{1}{5}\right),t\right) \dd x\right|_{t=T}\\
    =&\frac{1}{30}\intia  \phi_{{(-\frac{5}{6},1)y(\frac{4}{5},\frac{1}{5})}_1} \left(y_1+\frac{5}{6}\right) f(y,\eta,T) \dd (y,\eta)\\
    =&\frac{1}{900}\left(\phi_{(-\frac{5}{6},1),(-1,-1),(\frac{4}{5},\frac{1}{5})_1} \left(-1+\frac{5}{6}\right)\right)\\
    =&-\frac{1}{5400}\neq0.
\end{align*}
Also the on $\alpha$ dependent variance can increase when people have different importance weights which we see when changing the opinion weights in the last term of $f$ in Example \ref{counterexample_standartdev}, and thus consider 
    \begin{align*}  
      f(x,\alpha,t=T)=&\frac{1}{30}\delta_{((-\frac{5}{6},1),(\frac{4}{5},\frac{1}{5}))}(x,\alpha)+\frac{1}{30}\delta_{((-1,-1),(\frac{1}{2},\frac{1}{2}))}(x,\alpha)+\frac{14}{15}\delta_{((1,-1),(\frac{4}{5},\frac{1}{5}))}(x,\alpha).  
\end{align*}
We choose all the other parameters like in Example \ref{counterexample_standartdev}. And since then also \begin{align*}\phi_{(1,-1),(-1,-1),(\frac{4}{5},\frac{1}{5})}=(0,0),\end{align*} it follows similarly to above that
\begin{align*}
       -\frac{5}{6}-\Tilde{\mu}_1\left(T,\frac{4}{5},\frac{1}{5}\right)=& -\frac{5}{6}-\inti x_1f\left(x,\left(\frac{4}{5},\frac{1}{5}\right),T\right) \dd x=-\frac{5}{6} -\frac{163}{180}<0,
\end{align*}
and thus $\left.\frac{\dd}{\dd t}  \Tilde{v}(t)\right|_{t=T}
    = \frac{-1}{2700}\left(-\frac{5}{6}-\Tilde{\mu}_1\left(T,\left(\frac{4}{5},\frac{1}{5}\right)\right)\right)>0.$
\end{ex}}

\subsection{Maximum component-wise distance in opinion is non-increasing}
The following proposition shows that any solution to \eqref{eq:pdeweak} stays inside any hyper-rectangle that includes the component-wise the maximum and minimum opinion that people have.

\begin{pro}\label{pre:minmax_nondecr}
    Let $f$ be a solution to \eqref{eq:pdeweak} and define 
    \begin{align}\label{eq:j_f}
   \mathcal{J}_{f(t)}:=\{x\in\mI^d\mid\exists \alpha \in \mathcal{A} \text{ s.t. } f(x,\alpha,t)>0\}. 
\end{align} 
Then, for any d-dimensional hyper-rectangle $\mathcal{H}$ with $\mathcal{J}_{f(0)}\subseteq\mathcal{H}$, it holds that  $\mathcal{J}_{f(t)}\subseteq\mathcal{H}$ for all $t\in\mR_{\geq 0}$. 
\end{pro}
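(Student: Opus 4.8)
The plan is to work along characteristics, using the representation of the dynamics already set up in \eqref{eq:odeX}. Since the support of $f(t)$ is transported by the characteristic flow $X$, it suffices to show that if $x_0 \in \mathcal{H}$ then $X(x_0,\alpha,t) \in \mathcal{H}$ for all $t \geq 0$ and all $\alpha$. Write $\mathcal{H} = \prod_{a=1}^d [m_a, M_a]$ with $-1 \le m_a \le M_a \le 1$. I would fix a component $a$ and show separately that $X(x_0,\alpha,t)_a$ never exceeds $M_a$ and never drops below $m_a$; by symmetry I only spell out the upper bound.

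First I would argue at the level of the velocity field: if at some time $t$ a characteristic has $a$-th coordinate exactly equal to $M_a$, i.e. $X(x_0,\alpha,t)_a = M_a$, while the whole support of $f(t)$ still lies in $\mathcal{H}$ (so that every $y$ in the support satisfies $y_a \le M_a$), then the $a$-th component of the velocity is
\begin{align*}
u(X(x_0,\alpha,t))_a = \intia \phi_{xy\alpha_a}\,(y_a - M_a)\, \dd f(y,\eta,t) \le 0,
\end{align*}
because $\phi \ge 0$ and $y_a - M_a \le 0$ on the support. So on the face $\{x_a = M_a\}$ the flow points (weakly) inward in the $a$-direction — this is exactly the boundary-characteristics computation of Section \ref{sec:boundary_behabvior}, applied to the hyper-rectangle $\mathcal{H}$ rather than to $\mI^d$. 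The analogous statement holds on $\{x_a = m_a\}$ with the inequality reversed.

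To turn this infinitesimal inwardness into the global invariance claim I would use a standard comparison/continuity argument. Let $g_a(t) := \max_{x_0 \in \mathcal{H}, \alpha} X(x_0,\alpha,t)_a$ (the max is attained since $X$ is continuous and $\mathcal{H}\times\mathcal{A}$ is compact, by the existence result preceding Theorem~1.1), so $g_a(0) \le M_a$ and $\mathcal{J}_{f(t)} \subseteq \mathcal{H}$ is equivalent to $g_a(t)\le M_a$ and (the analogous lower quantity) $\ge m_a$ for all $a$. Suppose for contradiction that the set of times where $\mathcal{J}_{f(t)} \subseteq \mathcal{H}$ fails is nonempty, and let $t^\star$ be its infimum; by continuity of $X$ in $t$, $\mathcal{J}_{f(t^\star)} \subseteq \mathcal{H}$ still holds, and some characteristic reaches a face of $\mathcal{H}$ at time $t^\star$, say $X(x_0,\alpha,t^\star)_a = M_a$. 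On $[0,t^\star]$ the entire support lies in $\mathcal{H}$, so the inward-pointing estimate above applies, giving $\frac{\dd}{\dd t} X(x_0,\alpha,t)_a \le 0$ whenever this coordinate equals $M_a$; a Gr\"onwall-type or barrier argument (comparing $X(x_0,\alpha,t)_a$ with the constant $M_a$, using Lipschitz continuity of $u$ from the Lemma to get a one-sided differential inequality $\frac{\dd}{\dd t}(X(x_0,\alpha,t)_a - M_a)_+ \le C\,(X(x_0,\alpha,t)_a - M_a)_+$) then forces $X(x_0,\alpha,t)_a \le M_a$ for $t$ slightly beyond $t^\star$ as well, and similarly for every face, contradicting the definition of $t^\star$. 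Hence no such $t^\star$ exists and $\mathcal{J}_{f(t)} \subseteq \mathcal{H}$ for all $t \ge 0$.

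The main obstacle is making the "inward pointing on the boundary implies invariance" step fully rigorous: the subtlety is that the velocity bound $u_a \le 0$ on the face $\{x_a = M_a\}$ is only available as long as the \emph{whole} support stays in $\mathcal{H}$, so one cannot simply invoke a static invariant-region theorem for a fixed vector field — the field depends on $f(t)$, which depends on the flow. Phrasing it via the first exit time $t^\star$ and the one-sided differential inequality for $(X(x_0,\alpha,t)_a - M_a)_+$, valid on $[0,t^\star]$, is the clean way around this; the Lipschitz estimate from the Lemma is what guarantees $u$ is regular enough for the comparison argument. Everything else (compactness giving attained maxima, $\phi \ge 0$, $y_a \le M_a$ on the support) is routine.
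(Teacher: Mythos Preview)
Your proposal is correct and follows the same approach as the paper: show that the velocity field points (weakly) inward on every face of $\mathcal{H}$, which is exactly the computation from Section~\ref{sec:boundary_behabvior} transplanted to $\mathcal{H}$. The paper's proof is in fact terser than yours---it simply assumes $\mathcal{J}_{f(t)}\subseteq\mathcal{H}$ up to time $t$, verifies the inward-pointing inequality $u\cdot n\le 0$ on $\partial\mathcal{H}$, and stops---so your explicit first-exit-time and one-sided Gr\"onwall argument actually fills in the invariance step that the paper leaves implicit.
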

\begin{proof}
Let $\mathcal{H}$ be a hyper-rectangle satisfying  $\mathcal{J}_{f(0)}\subseteq\mathcal{H}$.
We want to proof the claim by showing that the characteristics at the boundary of $\mathcal{H}$ point inside. For this let us assume that up to time $t\in\mR_{\geq 0}$, $\mathcal{J}_{f(t)}\subseteq\mathcal{H}$. Let $\alpha\in\mathcal{A}$ and let $n$ denote the outer unit normal vector (as defined in Section \ref{sec:boundary_behabvior}). Note that for any $x$ on the boundary of $\mathcal{H}$ we have that $x_a \leq y_a$ for all $y\in\mathcal{J}_{f(t)}$ and $n_a<0$, or either $x_a \geq y_a$ for all $y\in\mathcal{J}_{f(t)}$ and $n_a\geq0$ or $n_a=0$ for any \ain. Thus, similarly to Section \ref{sec:boundary_behabvior},
 \begin{align*}
        \left(\intia  \phi_{xy\alpha}\odot (y-x) f(y,\eta,t)\dd (y,\eta)\right)\cdot n\leq 0.
\end{align*}

\end{proof}
This implies the following corollary about the maximum component-wise distance in opinion.


\change{
\begin{cor}\label{cor:minmax_nondecr}
    Let $f$ be a solution to \eqref{eq:pdeweak}. Then, the maximum component-wise distance in opinion, \begin{align*} \sup_{x,y\in\mathcal{J}_{f(t)}}\sup_{a\in\{1,\dots,d\}}|x_a-y_a|, 
\end{align*}
is non-increasing.
\end{cor}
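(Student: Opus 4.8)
The plan is to deduce Corollary \ref{cor:minmax_nondecr} directly from Proposition \ref{pre:minmax_nondecr}. The quantity
\[
W(t) := \sup_{x,y\in\mathcal{J}_{f(t)}}\ \sup_{a\in\{1,\dots,d\}}|x_a-y_a|
\]
is, up to a factor of two, the side length of the smallest axis-aligned hyper-rectangle (indeed, the smallest hypercube) containing $\mathcal{J}_{f(t)}$; more precisely, if $m_a(t):=\inf_{x\in\mathcal{J}_{f(t)}}x_a$ and $M_a(t):=\sup_{x\in\mathcal{J}_{f(t)}}x_a$, then $W(t)=\max_a (M_a(t)-m_a(t))$ and $\mathcal{H}(t):=\prod_{a=1}^d[m_a(t),M_a(t)]$ is the smallest hyper-rectangle with $\mathcal{J}_{f(t)}\subseteq\mathcal{H}(t)$.

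First I would fix $0\le s\le t$ and apply Proposition \ref{pre:minmax_nondecr} with the hyper-rectangle $\mathcal{H}=\mathcal{H}(s)=\prod_a[m_a(s),M_a(s)]$. Since by definition $\mathcal{J}_{f(s)}\subseteq\mathcal{H}(s)$, the proposition gives $\mathcal{J}_{f(t)}\subseteq\mathcal{H}(s)$ for all $t\ge s$. Taking componentwise infima and suprema over $\mathcal{J}_{f(t)}$ then yields $m_a(s)\le m_a(t)$ and $M_a(t)\le M_a(s)$ for every $a$, hence $M_a(t)-m_a(t)\le M_a(s)-m_a(s)$ for every $a$, and taking the maximum over $a\in\{1,\dots,d\}$ gives $W(t)\le W(s)$. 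Thus $W$ is non-increasing on $\mR_{\ge 0}$, which is exactly the claim $\frac{\dd}{\dd t}W(t)\le 0$ in the sense appropriate here.

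One small subtlety I would address is the regularity needed to pass from "$W$ non-increasing" to the stated derivative inequality: a priori $W$ is only monotone, not differentiable, so strictly speaking $\frac{\dd}{\dd t}W(t)\le 0$ should be read either as the statement that $W$ is non-increasing, or (where $W$ is differentiable, which is almost everywhere by monotonicity) as a pointwise derivative bound. Either reading follows immediately from the monotonicity established above, so I would just state the monotonicity of $W$ and note that it implies the corollary.

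The main (and only) obstacle is really bookkeeping: one must be careful that the extremal values $m_a(s),M_a(s)$ are attained or at least approximated within $\mathcal{J}_{f(s)}$ so that $\mathcal{H}(s)$ genuinely contains $\mathcal{J}_{f(s)}$ — which is true by construction as a closed hull — and that $\mathcal{J}_{f(t)}$, though possibly not closed, still has its componentwise inf/sup controlled by those of $\mathcal{H}(s)$, which is immediate from the set inclusion $\mathcal{J}_{f(t)}\subseteq\mathcal{H}(s)$. No new estimate beyond Proposition \ref{pre:minmax_nondecr} is required.
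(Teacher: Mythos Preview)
Your argument is correct and follows essentially the same route as the paper: both proofs invoke Proposition \ref{pre:minmax_nondecr} to conclude that each coordinate supremum $M_a(t)$ is non-increasing and each coordinate infimum $m_a(t)$ is non-decreasing, then combine these into the monotonicity of $\max_a(M_a-m_a)$. The only cosmetic difference is that you phrase it via the bounding box $\mathcal{H}(s)$ at a general time $s$ (implicitly using the autonomy of the equation to restart the flow), whereas the paper writes the derivative inequalities $\frac{\dd}{\dd t}M_a\le 0$, $\frac{\dd}{\dd t}m_a\ge 0$ directly; your remark on interpreting $\frac{\dd}{\dd t}W\le 0$ as monotonicity is a welcome clarification that the paper leaves implicit.
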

\begin{proof}
    Let $f$ be a solution to \eqref{eq:pdeweak} and let $T\in\mR_{>0}$ be fixed. Notice that we can built a hyper-rectangle $\mathcal{H}$ by taking as edges in each dimension $a\in\DD$ the topic wise supremum $s_a$ and infimum $i_a$ of opinion, i.e. $$s_a=\sup \left(\{x_a\in\mI| \exists \alpha\in\mathcal{A}, \exists y\in\mI^d \text{ with } y_a= x_a \text{ s.t. } f(y,\alpha,T)>0\}\right)$$ and $$i_a=\inf \left(\{x_a\in\mI| \exists \alpha\in\mathcal{A}, \exists y\in\mI^d \text{ with } y_a= x_a \text{ s.t. } f(y,\alpha,T)>0\}\right)$$. Clearly, $\mathcal{J}_{f(T)}\subseteq \mathcal{H}$, and thus, from Proposition \ref{pre:minmax_nondecr}, it follows that for all $t\geq T$ $\mathcal{J}_{f(t)}\subseteq \mathcal{H}$. This implies that for all $t\geq T$ and for all $a\in\DD$ $$\sup_{x\in\mathcal{J}_{f(t)}}x_a \leq \sup_{x\in\mathcal{J}_{f(T)}}x_a$$ and $$\inf_{x\in\mathcal{J}_{f(t)}}x_a \geq \inf_{x\in\mathcal{J}_{f(T)}}x_a,$$ which implies that $$\sup_{x,y\in\mathcal{J}_{f(t)}}|x_a-y_a|\leq \sup_{x,y\in\mathcal{J}_{f(T)}}|x_a-y_a|.$$
Since this holds for all $a\in\DD$, we get that for all $t\geq T$
    $$\sup_{x,y\in\mathcal{J}_{f(t)}}\sup_{a\in\{1,\dots,d\}}|x_a-y_a|\leq \sup_{x,y\in\mathcal{J}_{f(T)}}\sup_{a\in\{1,\dots,d\}}|x_a-y_a|,$$
    which proves the claim.
\end{proof}
}

\section{Stationary solutions}\label{sec:statsol}

Next we investigate possible stationary states of \eqref{eq:pdeweak}.
We say that an $f_\infty\in\mathcal{P}(\mathcal{Q})$ is a stationary solution of \eqref{eq:pdeweak} if it does not depend on time $t$ and it satisfies \eqref{eq:pdeweak}.
We will see that stationary solutions $f_\infty$ can be of the following forms: 
\begin{enumerate}[label={\textbf{(S\arabic*)}}]
    \item \textit{Consensus}; a single concentrated point measure (Dirac measure) in opinion space. (It does not need to be concentrated in importance space.)
    \item \textit{Separated clusters}; multiple Dirac measures in opinion space that are located so far from each other that no interactions are happening, i.e. $\phi_{xy\alpha}=0$ for all $x,y\in\mI^d, x\neq y$ $\alpha,\eta\in\mathcal{A}$ with 
    $(x,\alpha),(y,\eta)\in\text{supp}(f_\infty)$.
    \item\label{it:interactingcluster} \textit{Interacting clusters}; multiple interacting Dirac measures in opinion space, located in such as way that interactions cancel out. This means that there exist some  $x,y\in\mI^d, x\neq y$ $\alpha,\eta\in\mathcal{A}$ with
    \begin{align*}
        (x,\alpha),(y,\eta)\in\text{supp}(f_\infty) \text{ and } \phi_{xy\alpha}>0 \text{ as well as } \frac{\dd f_\infty(x,\alpha)}{\dd t}=0
    \end{align*}
       for all $(x,\alpha)\in\mathcal{Q}$. We will give an example of such an interacting cluster in Example \ref{ex:statsol}.

\end{enumerate}
\change{
The three stationary states are constructed explicitly and verified as stationary solutions. We do not claim this list is exhaustive, as a full classification of stationary states is outside the scope of this paper.\\
Note that the interacting cluster state corresponds to a stationary configuration not present in classical bounded-confidence models, which only exhibit consensus or fully decoupled cluster formation as long-time behaviour.}

\subsection{Consensus formation} 

We start by presenting results which lead to consensus under appropriate assumptions. First we consider the simplest case, i.e. $\phi_{x,y,\alpha}\equiv 1$. Clearly, if
\begin{align*}
    0=\Bigl(\inti (y-x) f_\infty(y)\dd y\Bigr) f_\infty(x) =  (\mu-x) f_\infty(x)
\end{align*}
holds, then $f_\infty$ is a stationary solution. We recall that $\mu$ corresponds to the mean defined in \eqref{eq:mu}. 
Thus, a stationary solution is given by \begin{align*}
    f_\infty(x)=\delta_{\mu}(x).
\end{align*}
From \eqref{eq:std_phi=1} it follows that this stationary solution is unique. \\

In the following theorem, we show that people reach consensus when everyone interacts on all topics initially. This is a similar result, but different proof, to what has been shown in the discrete case in \cite{cahill_modified_2025}. 

\change{}
\begin{thm}\label{thm:consensus}
    Let $\phi:[0,2]\to[0,1]$ be monotonically decreasing and $f_0\in\mathcal{P}(\mathcal{Q})$ such that $\phi(p_a(x,y,\alpha))\geq c$ for some $c\in \mR_{>0}$ for all  $a\in \{1,\dots,d\}$ and for all $ x,y\in\mI^d, \alpha \in \mathcal{A}$ with $(x,\alpha)\in\text{supp}(f_0)$ for which there exists an $\eta\in\mathcal{A}$ such that $(y,\eta)\in\text{supp}(f_0)$. Then, any solution $f$ of \eqref{eq:pdeweak} with initial condition $f_0$ converges in any Wasserstein metric $W_p$ ($p \leq 1 \leq \infty$) to $f_\infty(x,\alpha)=\inti f_0(y,\alpha) \dd y\delta_{\mu}(x)$ for some $\mu \in \mI^d$.
\end{thm}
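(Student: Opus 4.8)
The plan is to track the component-wise spread of the opinion support and show it contracts to zero, which—combined with mass conservation and the mean-field structure—forces convergence to a single Dirac in opinion space. Concretely, for each topic $a\in\DD$ let $m_a(t):=\inf_{x\in\mathcal{J}_{f(t)}}x_a$ and $M_a(t):=\sup_{x\in\mathcal{J}_{f(t)}}x_a$, and set $D_a(t):=M_a(t)-m_a(t)$. By Proposition \ref{pre:minmax_nondecr} the support stays inside the initial hyper-rectangle, so $D_a$ is bounded and, by Corollary \ref{cor:minmax_nondecr}, non-increasing. The first key step is to show that the lower bound $\phi(p_a(x,y,\alpha))\geq c$ propagates in time: since $\mathcal{J}_{f(t)}$ shrinks into $\mathcal{J}_{f(0)}$ component-wise, the relevant pairwise distances $p_a$ only decrease, and $\phi$ is monotonically decreasing, so $\phi_{xy\alpha_a}\geq c$ persists for all $(x,\alpha),(y,\eta)\in\mathrm{supp}(f(t))$ and all $t\geq 0$.

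The second, central step is a differential inequality for $D_a$. Working along characteristics (equation \eqref{eq:odeX}), a point $x$ on the ``upper face'' of the support in coordinate $a$ (i.e.\ with $x_a=M_a(t)$) moves with velocity whose $a$-th component is $\intia \phi_{xy\alpha_a}(y_a-x_a)\dd f(y,\eta,t)$; every term $y_a-x_a\leq 0$, and since a positive mass of $y$'s has $y_a$ bounded away from $M_a(t)$ by at least, say, $\tfrac{1}{2}D_a(t)$ combined with $\phi_{xy\alpha_a}\geq c$, this velocity is $\leq -c'\,D_a(t)$ for a constant $c'>0$ depending on $c$ and on the mass distribution. Symmetrically the lower face moves up at rate $\geq c' D_a(t)$. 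Hence $\tfrac{\dd}{\dd t}D_a(t)\leq -2c' D_a(t)$, giving exponential decay $D_a(t)\to 0$. Making the ``positive mass bounded away'' statement uniform in $t$ is the technical heart: one can use mass conservation together with the fact that the support is shrinking, extracting a fixed fraction of the mass that is eventually at distance $\geq\tfrac14 D_a$ from the current extreme point, or more simply work with the variance-type quantity $\intia (x_a-\mu_a(t))^2 f$ on each coordinate and combine with the uniform lower bound $\phi\geq c$ to get its exponential decay directly.

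Once every $D_a(t)\to 0$, the support collapses: $\mathcal{J}_{f(t)}$ has diameter tending to $0$, so $f(\cdot,\cdot,t)$ converges weakly-$*$ to a measure supported on $\{x=\mu\}\times\mathcal{A}$ for a single point $\mu\in\mI^d$. The third step identifies the limit measure and its $\alpha$-marginal: since the $\alpha$-component is unchanged by the dynamics (the push-forward $X$ acts as the identity on $\alpha$, cf.\ \eqref{eq:odeX}), the importance-weight marginal is conserved, so the limiting measure is $f_\infty(x,\alpha)=\bigl(\inti f_0(y,\alpha)\dd y\bigr)\delta_\mu(x)$; the value $\mu$ is the common limit of the extreme points, and lies in $\mI^d$ by Remark \ref{rem:x*y*inI}/Proposition \ref{pre:minmax_nondecr}. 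The main obstacle is the second step: establishing a \emph{time-uniform} coercivity constant $c'$ in the contraction estimate for $D_a$ (equivalently, ruling out that all mass in coordinate $a$ creeps arbitrarily slowly toward a point without ever concentrating), which is where mass conservation and the monotonicity of the support must be used carefully rather than the crude bound $y_a-x_a\leq 0$ alone.
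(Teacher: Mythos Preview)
Your plan is correct and matches the paper's proof almost exactly: propagate $\phi\ge c$ in time via the shrinking support and the monotonicity of $\phi$, use a half-mass pigeonhole on each coordinate together with the characteristic velocities at the extremal points to force $D_a(t)\to 0$, and identify the limit via conservation of the $\alpha$-marginal. The paper resolves your flagged ``main obstacle'' by a small reformulation---rather than proving $\dot D_a\le -c'D_a$ directly, it first lets $m_a(t)\to u$ and $M_a(t)\to v$ (monotone, bounded) and then applies the same midpoint pigeonhole at the \emph{fixed} point $\tfrac{u+v}{2}$ to obtain the constant-rate bound $\tfrac{\dd}{\dd t}\bigl(m_a(t)-M_a(t)\bigr)\ge \tfrac{c(v-u)}{4}$ for all $t$, which contradicts convergence unless $u=v$.
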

\begin{proof}
Notice that since $\phi$ is monotone decreasing, by Proposition \ref{pre:minmax_nondecr} it follows from the condition on $\phi$ and $f_0$ that for a solution $f$ to \eqref{eq:pdeweak} with initial condition $f_0$, that at any time step \tinb $$\phi(p_a(x,y,\alpha))\geq c$$ for all $  a\in \{1,\dots,d\}$ and for all $x,y\in\mI^d, \alpha \in \mathcal{A}$ with $(x,\alpha)\in\text{supp}(f(.,.,t))$ for which there exists an $\eta\in\mathcal{A}$ with $(y,\eta)\in\text{supp}(f(.,.,t))$.\\
Next we prove convergence in each dimension. Choose \ain arbitrarily and let $$(x^{\min}(t),x^{\max}(t))=\text{argsup}_{x,y\in\mathcal{J}_f(t)}|x_a-y_a|.$$ 
From Proposition \ref{pre:minmax_nondecr}, we know that  $x_a^{\min}$ is non-decreasing in $t$ and $x_a^{\max}$ is non-increasing. Since $x_a^{\min}(t)$ is bounded from above by $x_a^{\max}(t)$ and $x_a^{\max}(t)$ is bounded from below by $x_a^{\min}(t)$, it follows that $x_a^{\min}(t)$ and $x_a^{\max}(t)$ converge, i.e. there exist some $u,v\in\mI$ such that \begin{align*}
    x_a^{\min}(t) \rightarrow u \leq v \leftarrow x_a^{\max}(t).
\end{align*}
We want to show that $u=v$. For this let us assume that $u<v$. If we split the interval $[y,z]$ in half, there needs to be at least half of the mass on one of the two sides, i.e. either
\begin{enumerate}[label=(\roman*)]
    \item\label{i:case1} $\int\limits_\mathcal{A}\int\limits_{-1}^{\frac{u+v}{2}}\int\limits_{\mI^{d-1}}f(x,\alpha,t)\dd (x,\alpha) \geq \frac{1}{2}$ or 
    \item\label{i:case2}  $\int\limits_\mathcal{A}\int\limits_{\frac{u+v}{2}}^{1}\int\limits_{\mI^{d-1}}f(x,\alpha,t)\dd (x,\alpha) \geq \frac{1}{2}$
\end{enumerate} 
In case \ref{i:case1} along a characteristic curve with $x_a(t) \geq v $ it holds that for any \tinb 
\begin{align*}
    -\frac{\dd}{\dd t} x_a(t) =& -\intia  \phi_{x(t) y\alpha_a} (y_a-x_at)) f(y,\eta,t)\dd (y,\eta)\\
    \geq& -\int\limits_\mathcal{A}\int\limits_{-1}^{\frac{u+v}{2}}\int\limits_{\mI^{d-1}}\underbrace{\phi_{x(t) y\alpha_a}}_{\geq c}\underbrace{(y_a-x_a(t))}_{\leq -\frac{v-u}{2}}f(y,\eta,t)\dd (y,\eta)\\
    \geq& \frac{c(v-u)}{4}.
\end{align*}
Similarly in case \ref{i:case2}, along a characteristic curve with $x_a(t) \leq u$, we find for any \tinb 
\begin{align*}
    \frac{\dd}{\dd t} x_a(t) =& \intia  \phi_{x(t) y\alpha_a} (y_a-x_a(t)) f(y,\eta,t)\dd (y,\eta)\\
    \geq& \int\limits_\mathcal{A}\int\limits_{-1}^{\frac{u+v}{2}}\int\limits_{\mI^{d-1}}\underbrace{\phi_{x(t) y\alpha_a}}_{\geq c}\underbrace{(y_a-x_a(t))}_{\geq \frac{v-u}{2}}f(y,\eta,t)\dd (y,\eta)\\
    \geq& \frac{c(v-u)}{4}.
\end{align*}
Thus, in both cases, for any \tin, we obtain a linear change in time as long the characteristics are outside the interval $[u,v]$ and hence
\begin{align*}
\left(x_a^{\min}(t)-x_a^{\max}(t)\right)\geq \left(x_a^{\min}(0)-x_a^{\max}(0)\right) + \frac{c(v-u)t}{4}.
\end{align*}
If $ u\neq v$, This is a contradiction since $x_a^{\min}(t)$ and $x_a^{\max}(t)$ are converging.

Thus $u=v$ and $x_a^{\min}(t)$ and $x_a^{\max}(t)$ converge to the same value. Moreover, since $\alpha$ is arbitrary, this holds in every dimension and $f$ converges to one Dirac measure in space. Since $f$ does not change in the importance weight space, $\inti f(y,\alpha,t) \dd y$ does not change in time and thus, $f$ converges to $\inti f_0(y,\alpha) \dd y\delta_{\mu}(x)$ for some $\mu \in \mI^d$. 
\end{proof}

In the case when all people have the same opinion weights, Theorem \ref{thm:consensus} and \eqref{eq:mean_fix_alpha} imply
\begin{cor}
    Let all people have the same opinion weights $\alpha\in\mathcal{A}$. Let $\phi:[0,2]\to[0,1]$ be monotonically decreasing and $\rho_0\in\mathcal{P}(\mI^d)$ such that $\phi(p_a(x,y,\alpha))\geq c$ for some $c\in \mR_{>0}$ for all  $a\in \{1,\dots,d\}$ and for all $x,y\in\text{supp}(\rho_0)$. Then, any solution $\rho$ of \eqref{eq:pdeweak_fixalp} with initial condition $\rho_0$ converges to $\rho_\infty(x)=\delta_{\mu}(x)$, where $\mu$ denotes the mean opinion defined in \eqref{eq:mu}.
\end{cor}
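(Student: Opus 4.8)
The plan is to deduce the corollary directly from Theorem \ref{thm:consensus} together with the conservation of the mean established in \eqref{eq:mean_fix_alpha}. First I would observe that the fixed-weight dynamics is a special case of the general model: setting $f_0 := \rho_0\otimes\delta_\alpha$, i.e.\ $f(x,\eta,t)=\rho(x,t)\,\delta_\alpha(\eta)$, turns \eqref{eq:pdeweak_fixalp} into \eqref{eq:pdeweak}, and $\mathrm{supp}(f_0)=\mathrm{supp}(\rho_0)\times\{\alpha\}$. Under this identification, for any pair $(x,\alpha),(y,\alpha)\in\mathrm{supp}(f_0)$ the only relevant importance weights are both equal to the fixed $\alpha$, so the assumed bound $\phi(p_a(x,y,\alpha))\ge c$ for all $a\in\DD$ and all $x,y\in\mathrm{supp}(\rho_0)$ is precisely the hypothesis required by Theorem \ref{thm:consensus}.

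Next I would apply Theorem \ref{thm:consensus} to this $f_0$. It yields that $f(\cdot,\cdot,t)$ converges to $\bigl(\inti f_0(y,\cdot)\dd y\bigr)\delta_{\mu'}=\delta_\alpha\otimes\delta_{\mu'}$ for some $\mu'\in\mI^d$, which under the identification above is exactly the statement that $\rho(\cdot,t)\to\delta_{\mu'}$ in the sense of weak convergence of probability measures on $\mI^d$.

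It then remains to identify $\mu'$ with the mean $\mu$ from \eqref{eq:mu}. By \eqref{eq:mean_fix_alpha} the mean $\mu(t)=\inti x\,\rho(x,t)\dd x$ is constant in time, hence $\mu(t)=\mu(0)=\mu$ for all $t\in\mR_{\geq 0}$. On the other hand, since $\mI^d$ is compact and the map $x\mapsto x$ is bounded and continuous, the weak convergence $\rho(\cdot,t)\to\delta_{\mu'}$ forces $\mu(t)\to\mu'$ as $t\to\infty$. Combining the two facts gives $\mu'=\mu$, so $\rho_\infty(x)=\delta_\mu(x)$, as claimed.

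I do not expect a genuine obstacle here, since this is essentially a translation of Theorem \ref{thm:consensus} into the fixed-weight notation. The only point requiring a little care is that Theorem \ref{thm:consensus} identifies the limit only up to ``some $\mu'\in\mI^d$'', and the correct value is recovered from the conservation law \eqref{eq:mean_fix_alpha} via continuity of the first moment under weak convergence on the compact domain $\mI^d$.
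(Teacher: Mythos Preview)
Your proposal is correct and follows exactly the approach the paper indicates: the paper simply states that the corollary follows from Theorem~\ref{thm:consensus} together with \eqref{eq:mean_fix_alpha}, and you have carefully spelled out this implication (embedding $\rho$ into $f$ via $f_0=\rho_0\otimes\delta_\alpha$, applying Theorem~\ref{thm:consensus}, and identifying the limit point via mean conservation and weak convergence on the compact domain).
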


\subsection{Separated clusters}\label{sec:clusterform}

Next we want to investigate stationary solutions of \eqref{eq:pdeweak}, for which clusters do not interact.
It holds that in general any \begin{align*}
    f_\infty(x,\alpha)=\sum_{\ell=1}^M c_\ell \delta_{(z_\ell,\alpha_\ell)}(x,\alpha)\end{align*} for $M \in \mN, z_\ell \in \mI^d, \alpha_\ell \in \mathcal{A}$ for all $l \in \{1,...,M\}$ and for all $l \in \{1,...,M\}$ with $ c_\ell>0$ and  $\sum_{\ell=1}^M c_\ell=1$ is a stationary solution if the $z_\ell$ are spread out sufficiently, i.e. $\phi_{z_\ell z_k \alpha_\ell}\odot (z_k-z_\ell)=0$ for all $l,k \in \{1,...,M\}$.
We can prove this by plugging this $f_\infty$ in the weak formulation \eqref{eq:pdeweak} 
\begin{align*}
    \frac{\dd}{\dd t}\intia \xi(x,\alpha) &f_\infty(x,\alpha) \dd (x,\alpha)\\
    =&\intiaia \nabla_x \xi(x,\alpha) \cdot (\phi_{xy\alpha}\odot (y-x)) f_\infty(x,\alpha)f(y,\eta,t) \dd (y,\eta) \dd (x,\alpha)\\
    =&\intiaia \nabla_x \xi(x,\alpha) \cdot (\phi_{xy\alpha}\odot (y-x)) \sum_{\ell=1}^\mathcal{L} c_\ell \delta_{(z_\ell,\alpha_\ell)}(x)\sum_{l=k}^\mathcal{L} c_k \delta_{(z_k,\alpha_k)}(y) \dd (y,\eta) \dd (x,\alpha)\\
    =& \sum_{\ell=1}^\mathcal{\mathcal{L}} c_\ell\sum_{l=k}^\mathcal{L} c_k \nabla_x \xi((z_\ell,\alpha_\ell)) \cdot (\phi_{z_\ell z_k\alpha_\ell}\odot (z_k-z_\ell))\\
    =&0.
\end{align*}
Furthermore, in the case of same importance weights $\alpha$, stationary solution have to be of that form.
This follows from  \eqref{eq:standdev_fixalpha}, in particular
\begin{align*}
    \frac{\dd v}{\dd t} = -\intii \sum_{a=1}^d(x_a - y_a)^2\phi_{xy_a} \rho(x,t)\rho(y,t) \dd y \dd x.
\end{align*}
Thus, if there exist $x,y\in\mI^d$ with $ x\neq y$ such that $\rho(x)>0, \rho(y)>0$ and $\phi_{x,y}\neq 0$, then $\frac{\dd v}{\dd t}<0$. Consequently, $\rho$ can not be a stationary solution and we obtain the following corollary.
\begin{cor}
Let all people have the same importance weights and set
\begin{align}
    \label{eq:f_delta}
    \rho_\infty(x)=\sum_{\ell=1}^M c_\ell \delta_{z_\ell}(x),
\end{align}
with $M \in \mN, z_\ell \in \mI^d$ and $c_\ell>0$ for all $\ell \in \{1,...,M\}$ and $\sum_{\ell=1}^M c_\ell=1$. \\
Then $\rho_\infty(x)$ given by \eqref{eq:f_delta} is a stationary solution to \eqref{eq:pdeweak_fixalp} if and only if $$\phi_{z_\ell z_k}\odot(z_k-z_\ell)=0 \text{ for all } \ell,k \in \{1,...,M\}.$$
\end{cor}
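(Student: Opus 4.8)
The statement is an equivalence, and both directions are essentially already present in the material preceding it; the plan is mainly to assemble them cleanly. For the \emph{if} direction I would substitute $\rho_\infty$ of the form \eqref{eq:f_delta} into the weak formulation \eqref{eq:pdeweak_fixalp} against an arbitrary test function $\xi$. The double integral collapses to the finite sum $\sum_{\ell,k=1}^M c_\ell c_k\,\nabla_x\xi(z_\ell)\cdot\bigl(\phi_{z_\ell z_k}\odot(z_k-z_\ell)\bigr)$, exactly as in the computation carried out just before the corollary for the general (non-fixed-weight) ansatz. Under the hypothesis $\phi_{z_\ell z_k}\odot(z_k-z_\ell)=0$ for all $\ell,k$, every summand vanishes, so the right-hand side is $0$ for every $\xi$; since $\rho_\infty$ is manifestly time-independent, it is a stationary solution of \eqref{eq:pdeweak_fixalp}.

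For the \emph{only if} direction I would argue via the variance. Because all people share the same importance weights, $\phi$ is independent of $\alpha$ and symmetric, $\phi_{xy}=\phi_{yx}$, so identity \eqref{eq:standdev_fixalpha} applies and gives, for the solution started from $\rho_\infty$,
\begin{align*}
\frac{\dd v(t)}{\dd t} = -\intii \sum_{a=1}^d (x_a-y_a)^2\,\phi_{xy_a}\,\rho(x,t)\rho(y,t)\dd y\dd x .
\end{align*}
If $\rho_\infty$ is stationary, then $v(t)$ is constant (it is a functional of $\rho_\infty$, which does not depend on $t$; also $\mu$ is constant by \eqref{eq:mean_fix_alpha}), hence $\frac{\dd v}{\dd t}=0$. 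Evaluating the right-hand side at $\rho_\infty$ yields $0=\sum_{\ell,k=1}^M c_\ell c_k\sum_{a=1}^d (z_{\ell,a}-z_{k,a})^2\,(\phi_{z_\ell z_k})_a$, a sum of nonnegative terms since $c_\ell,c_k>0$ and $\phi\ge 0$. Therefore each term vanishes: $(z_{\ell,a}-z_{k,a})^2\,(\phi_{z_\ell z_k})_a=0$ for all $\ell,k$ and all $a\in\DD$, which is equivalent to $(z_{k,a}-z_{\ell,a})\,(\phi_{z_\ell z_k})_a=0$ for every $a$, i.e.\ $\phi_{z_\ell z_k}\odot(z_k-z_\ell)=0$ for all $\ell,k$, as claimed.

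The only genuinely non-formal point — and the step I expect to be the (mild) main obstacle — is the passage from $\frac{\dd v}{\dd t}=0$ to the component-wise conditions on the atoms: it relies on each summand being nonnegative, so that a vanishing finite sum forces every term to be zero, together with the elementary fact that for $s\in\mR$ and $t\ge 0$ one has $s^2 t=0 \iff st=0$. Making explicit that stationarity immediately yields $\dot v=0$ is the other point to state carefully, but it is trivial here precisely because $v$ is a functional of the frozen measure. No compactness or regularity difficulties arise, since $\mI^d$ is bounded and $\rho_\infty$ is a finite combination of Dirac masses, so all integrals reduce to finite sums.
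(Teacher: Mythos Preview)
Your proposal is correct and follows essentially the same approach as the paper: the \emph{if} direction is the substitution into the weak formulation already carried out just above the corollary, and the \emph{only if} direction is the variance identity \eqref{eq:standdev_fixalpha}, which the paper invokes in the sentence immediately preceding the corollary. Your version is in fact slightly more careful than the paper's own text in the \emph{only if} direction: the paper argues informally that if some $x\neq y$ in the support satisfy $\phi_{x,y}\neq 0$ then $\dot v<0$, whereas you make explicit that the vanishing of the nonnegative sum $\sum_{\ell,k}c_\ell c_k\sum_a (z_{\ell,a}-z_{k,a})^2(\phi_{z_\ell z_k})_a$ forces each summand to vanish, and hence the component-wise condition $\phi_{z_\ell z_k}\odot(z_k-z_\ell)=0$; this is exactly the right level of precision, since the corollary's condition is component-wise and not merely $\phi_{z_\ell z_k}=0$.
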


Since it gives a necessary condition, it implies that when people have the same importance weights, the Dirac masses that the stationary solutions consist of have to be located a certain distance apart from each other. And, since the opinion space we consider is bounded, we can compute a bound on the number of Dirac measures. 

\subsubsection{Maximal number of clusters in the case of same importance weights in $2D$}

We wish to determine the maximal number of clusters in a stationary solution to \eqref{eq:pdeweak} for $d=2$. Since the interaction radii depend on the $p$-norm and thus on the choice of $\alpha$, we consider the simpler case of equal importance weights, i.e. $\alpha = (\alpha_1, \alpha_2)$ for everyone. Furthermore, we assume that $\beta\geq \frac{1}{2}$ and let $\text{supp}(\phi)\subseteq [0,R]$ for some $R\in (0,2]$.\\

Consider the p-distance defined in \eqref{eq:p_distance}. Since $\beta \geq \frac{1}{2}$, we have that $\frac{R}{\beta}\geq \frac{R}{(1-\beta)\alpha_1}$ and $\frac{R}{\beta}\geq \frac{R}{(1-\beta)\alpha_2}$. Since the interaction function $\phi$ is compactly supported on $[0,R]$, we can sketch the interaction domain of a person with opinion $(x_1, x_2)$ in Figure \ref{fig:shapealphanorm}. \change{We see that a person having opinion $(x_1, x_2)$ would interact on topic one with all people having opinion vectors in $$\{y\in\mI^2| p_1(x,y,\alpha)\leq R\},$$ corresponding to the dark purple diamond, and regarding topic two with all people in $$\{y\in\mI^2| p_2(x,y,\alpha)\leq R\},$$ corresponding to the light purple diamond. In particular, they would interact on both topics with people having opinions in the intersection of those two sets. We can bound that region from below by the set $$\{y\in\mI^2|(\beta + (1-\beta)\alpha_1)|x_1-y_1|+(\beta + (1-\beta)\alpha_2)|x_2-y_2|\leq R\},$$ displayed as orange diamond, and from above by $$\{y\in\mI^2|(\beta + (1-\beta)\alpha_1)|x_1-y_1| \leq R \land (\beta + (1-\beta)\alpha_2)|x_2-y_2|\leq R\},$$ i.e. the green rectangle, and by $$\{y\in\mI^2|(1-\beta)\alpha_1|x_1-y_1|+(1-\beta)\alpha_2|x_2-y_2|\leq R\},$$ i.e. the blue diamond.}\\

\begin{figure}[h!t]
\centering
    \ctikzfig{bilder/shape_alphanorm_new}
    \caption{Interaction radius defined by the $p_\alpha$-distance \eqref{eq:p_distance} for an interaction function $\phi$ with compact support on $[0,R]$}
    \label{fig:shapealphanorm}
\end{figure}

\noindent The upper and lower bounds on the square $\mI^d=[-1,1]^2$ follow from the following considerations.
\begin{itemize}
    \item Upper bound: the maximum number of Dirac measures is bounded from above by the maximum number of orange diamonds fitting into $[-1,1]^2$, i.e. 
     $$2\lfloor\frac{2(\beta+(1-\beta)\alpha_1}{R}\rfloor\lfloor\frac{2(\beta+(1-\beta)\alpha_2}{R}\rfloor.$$
     \item Lower bound: Clearly, $1$ is a lower bound. However, one can improve this bound by considering the maximum number of green rectangles in $[-1,1]^2$, i.e. $\lfloor\frac{2(\beta+(1-\beta)\alpha_1}{R}\rfloor\lfloor\frac{2(\beta+(1-\beta)\alpha_2}{R}\rfloor$ as well as the maximum number of blue diamonds, i.e. $\lfloor\frac{2(1-\beta)\alpha_1}{R}\rfloor \lfloor\frac{2(1-\beta)\alpha_2}{R}\rfloor + \lfloor\frac{2(1-\beta)\alpha_1}{R}-1\rfloor \lfloor\frac{2(1-\beta)\alpha_2}{R}-1\rfloor$. 
Thus, a better lower bound corresponds to the maximum of those two.
\end{itemize}
Coming back to the computations done at the beginning in Section \ref{sec:clusterform}, we see that even when people have different importance weights, the following more general but also weaker corollary holds.
\begin{cor}\label{cor:var_alp_stat_sol}
Let
\begin{align}
\label{eq:f_delta_2}
f_\infty(x,\alpha)=\sum_{\ell=1}^M c_\ell \delta_{(z_\ell,\alpha_\ell)}(x,\alpha)\end{align}
with $M \in \mN, z_\ell \in \mI^d, \alpha_\ell \in\mathcal{A}$ and $
c_\ell>0$ for all $\ell \in \{1,...,M\}$ and $\sum_{\ell=1}^M c_\ell=1$. \\
Then $f_\infty(x,\alpha)$ given by \eqref{eq:f_delta_2} is a stationary solution to \eqref{eq:pdeweak} if $$\phi_{z_\ell z_k \alpha_\ell}\odot(z_k-z_\ell)=0 \text{ for all } \ell,k \in \{1,...,M\}.$$ 
\end{cor}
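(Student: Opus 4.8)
The plan is to verify directly that the proposed measure $f_\infty$ of the form \eqref{eq:f_delta_2} satisfies the weak formulation \eqref{eq:pdeweak}, exactly mirroring the computation already carried out at the beginning of Section \ref{sec:clusterform} for the same-importance-weights case, but now keeping the importance weights $\alpha_\ell$ attached to each Dirac mass. Concretely, I would take an arbitrary test function $\xi\in C_c^\infty(\mathcal{Q})$ and plug $f=f_\infty$ into the right-hand side of \eqref{eq:pdeweak}.

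\begin{proof}
We show that $f_\infty$ given by \eqref{eq:f_delta_2} satisfies the weak formulation \eqref{eq:pdeweak}. Since $f_\infty$ is independent of $t$, the left-hand side of \eqref{eq:pdeweak} vanishes, so it suffices to show the right-hand side vanishes for every test function $\xi\in C_c^\infty(\mathcal{Q})$. Substituting $f=f_\infty=\sum_{\ell=1}^M c_\ell \delta_{(z_\ell,\alpha_\ell)}$ for both occurrences of $f$ and evaluating the resulting double integral against the two sums of Dirac measures, we obtain
\begin{align*}
    &\intiaia \nabla_x \xi(x,\alpha) \cdot (\phi_{xy\alpha}\odot (y-x)) f_\infty(x,\alpha)f_\infty(y,\eta) \dd (y,\eta) \dd (x,\alpha)\\
    &\qquad = \sum_{\ell=1}^M\sum_{k=1}^M c_\ell c_k \,\nabla_x \xi(z_\ell,\alpha_\ell) \cdot (\phi_{z_\ell z_k\alpha_\ell}\odot (z_k-z_\ell)).
\end{align*}
By the hypothesis $\phi_{z_\ell z_k \alpha_\ell}\odot(z_k-z_\ell)=0$ for all $\ell,k\in\{1,\dots,M\}$, every summand is zero, so the whole expression vanishes. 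Hence $f_\infty$ satisfies \eqref{eq:pdeweak}, and since it is time-independent and (by $c_\ell>0$ and $\sum_\ell c_\ell = 1$) a probability measure on $\mathcal{Q}$, it is a stationary solution.
\end{proof}

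There is essentially no obstacle here: the statement is only a sufficient condition (the ``if'' direction), so no converse argument about whether interacting configurations could also be stationary is required, and the subtlety that with differing importance weights one may have $\phi_{xy\alpha}\neq\phi_{yx\eta}$ (which was what obstructed the variance argument in Section \ref{sec:standdev}) never enters, because the hypothesis forces the displacement term itself to vanish for every ordered pair $(\ell,k)$. The only minor point to be careful about is the bookkeeping when $z_\ell$ and $z_k$ may coincide while $\alpha_\ell\neq\alpha_k$: this is harmless, since the diagonal terms $\ell=k$ contribute $\phi_{z_\ell z_\ell \alpha_\ell}\odot(z_\ell-z_\ell)=0$ automatically, and any off-diagonal terms with $z_\ell=z_k$ likewise contribute zero for the same reason. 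Thus the proof is a direct substitution, essentially identical to the one already displayed for the fixed-$\alpha$ case, and I would present it in two or three lines.
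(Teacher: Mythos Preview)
Your proof is correct and is exactly the approach the paper uses: the corollary is stated as an immediate consequence of the direct substitution computation displayed at the beginning of Section~\ref{sec:clusterform}, which is precisely the calculation you reproduce. No further argument is given or needed in the paper.
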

Note that Corollary \ref{cor:var_alp_stat_sol} is a sufficient but not necessary condition. This motivates the next part where we look into stationary states that have a different form.

\subsection{Interacting clusters}

We conclude with two examples illustrating the existence of interacting clusters \ref{it:interactingcluster} when people can have different importance weights. Furthermore, we provide an example showing that the distance between the location of the Dirac measure masses in these interacting clusters can be arbitrarily close. 
\begin{ex}\label{ex:statsol} Consider
    \begin{align}\label{eq:fake_ss}
    f_\infty(x,\alpha)=&\frac{1}{3}\delta_{((-1,-\frac{1}{2}),(1,0))}(x,\alpha)+\frac{1}{3}\delta_{((0,0),(0,1))}(x,\alpha)+\frac{1}{3}\delta_{((1,\frac{1}{2}),(1,0))}(x,\alpha).
\end{align}
Set $\beta=\frac{1}{2}$ and consider the smoothed interaction function \eqref{eq:phismooth} with $r_1=\frac{1}{2}$ and $r_2=\frac{5}{8}$. We will show that $f_{\infty}$ satisfies the assumption of an interacting cluster.\\
Note that
\begin{align*}
\phi_{(0,0),(-1,-\frac{1}{2}),(0,1)}=\phi_{(0,0),(1,\frac{1}{2}),(0,1)}=&(0,1)\\
\phi_{(-1,-\frac{1}{2}),(0,0),(1,0)}=\phi_{(1,\frac{1}{2}),(0,0),(1,0)}=\phi_{(-1,-\frac{1}{2}),(1,\frac{1}{2}),(1,0)}=\phi_{(1,\frac{1}{2}),(-1,-\frac{1}{2}),(1,0)}=&(0,0)
\end{align*}
and define
\begin{align}
\label{eq:S}
    S(x,\alpha, f):=\Bigl(\intia  \phi_{xy\alpha}\odot (y-x) f(y,\eta,t)\dd (y,\eta)\Bigr) f(x,\alpha,t).
\end{align}
A stationary solution $f_{\infty}$ has to satisfy $S(x,\alpha, f_\infty)=(0,0)$. Clearly, for all $(x,\alpha)\in\mathcal{Q}\backslash((0,0),(0,1))$, $S(x,\alpha, f_\infty)=(0,0)$ since there either $f_\infty(x,\alpha)=0$ or $\phi_{xy\alpha}=0$ for all $y\in \mI^d$. In addition, we get
\begin{align*}
    S((0,0),(0,1), f_\infty)=\frac{1}{9}\left((0,1)\odot (-1,-\frac{1}{2})+(0,1)\odot (1,\frac{1}{2})\right)=\frac{1}{9}\left(0,-\frac{1}{2}+\frac{1}{2}\right)=(0,0).
\end{align*}
Therefore \eqref{eq:fake_ss}, also shown in Figure \ref{fig:ex_statsol}, is an interacting cluster.
    \begin{figure}[h!t]
    \centering
    \includegraphics[width=0.4\linewidth]{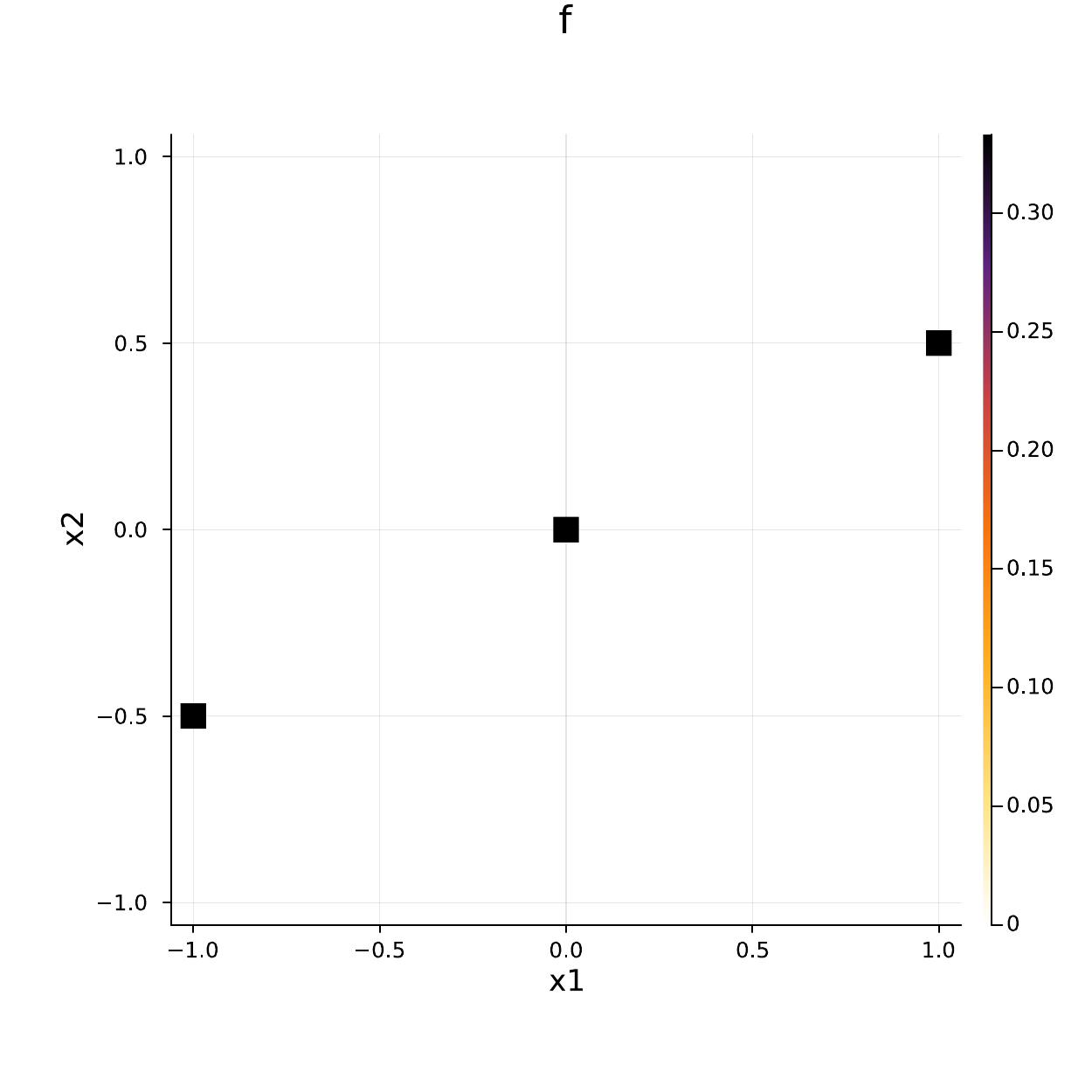}
    \caption{Example of an interacting cluster discussed in Example \ref{ex:statsol}}
    \label{fig:ex_statsol}
\end{figure}
\end{ex}

Note that \eqref{eq:fake_ss} is not a stable stationary solution. To show that, we add a small $\epsilon\in(0,\frac{1}{8})$ to, for example, the Dirac measure at $(-1,-\frac{1}{2})$. Since $\epsilon$ is small, it still holds that
\begin{align}
    \phi_{(0,0),(-1,-\frac{1}{2}),(0,1)}=\phi_{(0,0),(1,\frac{1}{2}),(0,1)}=&(0,1).
\end{align}
However,
\begin{align*}
    S((0,0),(0,1), f_\infty)=\frac{1}{9}\left((0,1)\odot (-1,-\frac{1}{2}+\epsilon)+(0,1)\odot (1,\frac{1}{2})\right)=\frac{1}{9}\left(0,\epsilon\right)\neq (0,0).
\end{align*}

\FloatBarrier
\begin{ex}\label{ex:closeness} In this example we will show that the location of the interacting clusters can be arbitrarily close.\\ 
Let $\varepsilon \in \left(0,\frac{1}{4}\right]$ be arbitrary. As in Example \ref{ex:statsol}, we choose $\beta=\frac{1}{2}$ and a smoothed bounded confidence function $\phi(r)$ with $r_1=\frac{1}{2}$ and $r_2=\min\left(\frac{5}{8},\frac{1}{2}+\epsilon\right)$ in \eqref{eq:phismooth}. We now want to show that
    \begin{alignb}
    \begin{split}
    \label{eq:statsol_epsilon}
    f_\infty(x,\alpha)=&\frac{2\varepsilon}{1+2\varepsilon}\delta_{((-1,-\frac{1}{2}),(1,0))}(x,\alpha)+\frac{1}{4}\delta_{((0,0),(0,1))}(x,\alpha)+\frac{2\varepsilon}{1+2\varepsilon}\delta_{((1,\frac{1}{2}),(1,0))}(x,\alpha)\\ &+\frac{3-10\varepsilon}{8(1+2\varepsilon)}\delta_{((0,\varepsilon),(1,0))}(x,\alpha)+\frac{3-10\varepsilon}{8(1+2\varepsilon)}\delta_{((0,\varepsilon),(1,0))}(x,\alpha).
    \end{split}
\end{alignb}
is a stationary solution. Let us compute
\begin{align*} 
\phi_{(0,0),(-1,-\frac{1}{2}),(0,1)}=\phi_{(0,0),(1,\frac{1}{2}),(0,1)}=&(0,1)\\
\phi_{(-1,-\frac{1}{2}),(0,0),(1,0)}=\phi_{(1,\frac{1}{2}),(0,0),(1,0)}=\phi_{(-1,-\frac{1}{2}),(1,\frac{1}{2}),(1,0)}=\phi_{(1,\frac{1}{2}),(-1,-\frac{1}{2}),(1,0)}=&(0,0)\\
\phi_{(-1,-\frac{1}{2}),(0,\pm \varepsilon),(1,0)}=\phi_{(1,\frac{1}{2}),(0,\pm\varepsilon),(1,0)}=&(0,0)\\
\phi_{(0,-\varepsilon),(-1,-\frac{1}{2}),(0,1)}=\phi_{(0,\varepsilon),(1,\frac{1}{2}),(0,1)}=&(0,1)\\ \phi_{(0,\varepsilon),(-1,-\frac{1}{2}),(0,1)}=\phi_{(0,-\varepsilon),(1,\frac{1}{2}),(0,1)}=&(0,0)\\
\phi_{(0,\varepsilon),(0,-\varepsilon),(0,1)}=\phi_{(0,\pm\varepsilon),(0,0),(0,1)}=&(1,1).
\end{align*}
Clearly, for all $(x,\alpha)\in\mathcal{Q}\backslash\{((0,0),(0,1)),((0,\pm \varepsilon),(0,1))\}$, $S(x,\alpha, f_\infty)=(0,0)$, as defined in \eqref{eq:S}, since there either $f_\infty(x,\alpha)=0$ or the $\phi_{xy\alpha}=0$ for all $y\in \mI^d$. In addition,
\begin{align*}
    S((0,0),(0,1),f_\infty)
    &=\frac{1}{4}\left(\frac{2\varepsilon}{1+2\varepsilon}\left((0,1)\odot (-1,-\frac{1}{2})+(0,1)\odot (1,\frac{1}{2})\right) \right. \\
    & \qquad \left.+\frac{3-10\varepsilon}{8(1+2\varepsilon)}\left((1,1)\odot (0,-\varepsilon)+(1,1)\odot (0,\varepsilon)\right)\right)\\
    &=(0,0),
    \end{align*}
    and
    \begin{align*}
    S((0,\pm\varepsilon),(0,1), f_\infty)&=\frac{3-10\varepsilon}{8(1+2\varepsilon)}\left(\frac{2\varepsilon}{1+2\varepsilon}(0,1)\odot (\pm 1,\pm \frac{1}{2}\mp \varepsilon)+\frac{1}{4}(1,1)\odot (0,\mp \varepsilon) \right. \\
    & \qquad \left. + \frac{3-10\varepsilon}{8(1+2\varepsilon)}(1,1)\odot (0,\mp 2\varepsilon)\right)\\
    &=(0,0).
\end{align*}
Thus, $f_\infty(x,\alpha)$ defined in \eqref{eq:statsol_epsilon} and shown in Figure \ref{fig:ex_statsol_epsi} is a stationary solution, in which the interacting clusters are arbitrarily close.

    \begin{figure}[h!t]
    \centering
    \includegraphics[width=0.5\linewidth]{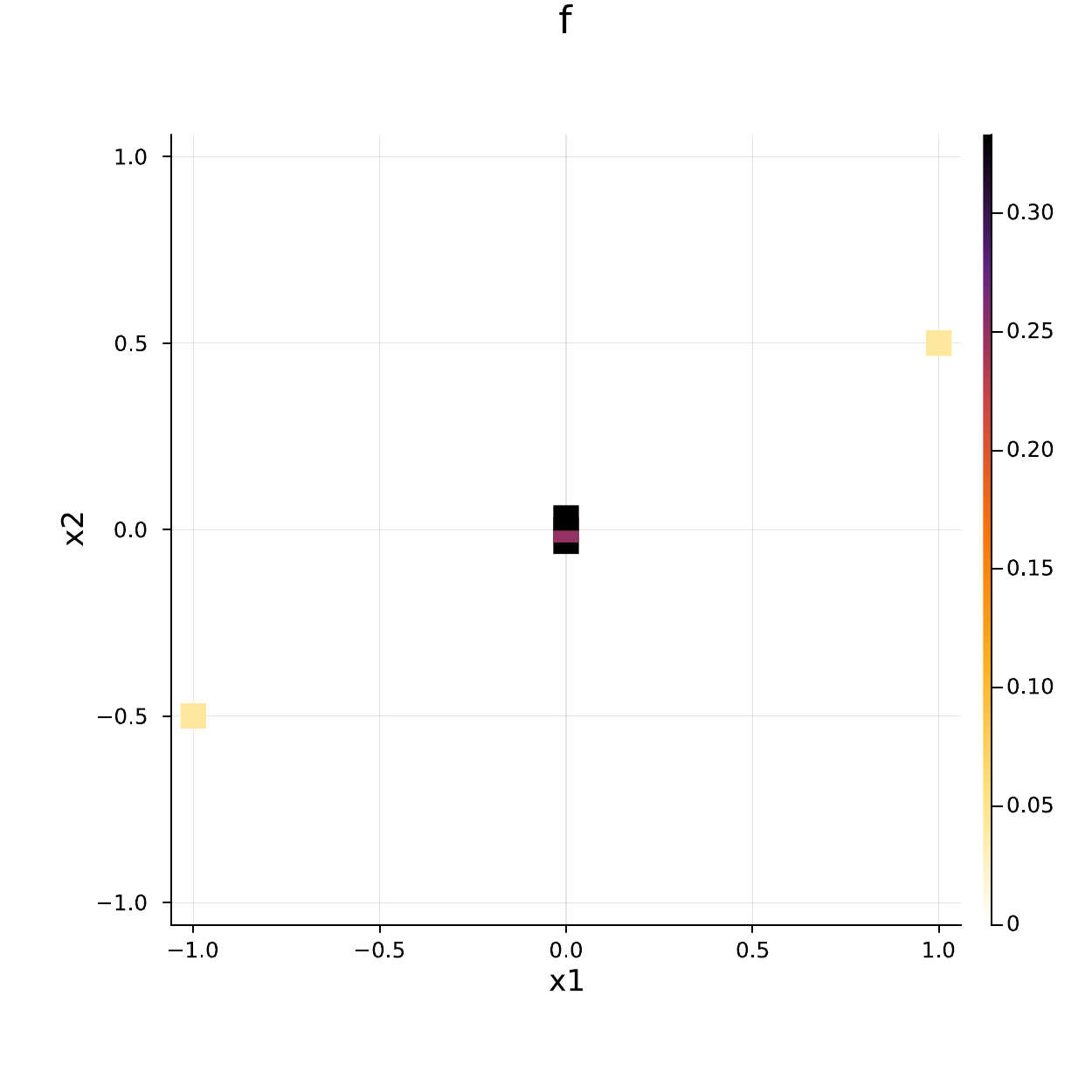}
    \caption{Interacting cluster \eqref{eq:statsol_epsilon}, in which the Dirac measures are arbitrary close (as discussed in Example \ref{ex:closeness}).}
    \label{fig:ex_statsol_epsi}
\end{figure}
\end{ex}

\FloatBarrier

\section{Simulations}\label{sec:numerics}

We now illustrate the dynamics of \eqref{eq:pde} using computational experiments. In doing so we approximate
\( f(x, \alpha, t) \) by \change{the empirical density}
\[
f(x, \alpha, t) \approx \change{\frac{1}{N}}\sum_{i=1}^N \delta_{x_i(t)}(x)\, \delta_{\alpha_i}(\alpha),
\]
where \( x_i(t) \) is the position of particle \( i \) at time \( t \), and \( \alpha_i \) is its importance weight. The evolution of the particle positions is governed by the ODE system
\begin{align}\label{eq:ode}
\frac{d}{dt} x_i(t) = \frac{1}{N} \sum_{\substack{j=1 \\ j \neq i}}^N \phi_{x_i x_j \alpha_i} \odot (x_j - x_i) \qquad \text{ for } i = 1, \ldots N.
\end{align}
The initial positions \( x_i(0) \) are computed from the initial particle distribution \( f_0(x, \alpha) \). In particular, we discretize the domain into $65$ grid points in each direction of opinion space, and a set of parameter values \( \{\alpha_l\} \).
At each grid point $x_k$, we compute the initial density \( f_0(x_k, \alpha_l) \) and place \( n_{k,l} = \mathrm{round}(f_0(x_k, \alpha_l) \cdot s) \) particles at position \( x_k \) with parameter \( \alpha_l \), where \( s \) is a scaling factor controlling the total number of particles.
All particles have the same weight, i.e. $
w_i = \frac{1}{N}
$ where \( N \) is the total number of particles. We solve \eqref{eq:ode} using the Julia package solver "Vern9()", see \cite{ODESolver}. Vern9 is "Verner's “Most Efficient” 9/8 Runge-Kutta method", which is characterised by its high accuracy and stability.

\subsection{Opinion dynamics for different distance functions}\label{sec:sim_ex1}
In the following we discuss the impact of the distance used to measure 'closeness in opinion' on the dynamics and the stationary states of \eqref{eq:pde}. We demonstrate that for the Euclidean distance, the component-wise distance and the $p_\alpha$-distance \eqref{eq:p_distance} with same $\alpha$ for all people, the observable dynamics are rather simple and interactions are symmetric while when choosing the $p_\alpha$-distance \eqref{eq:p_distance} and assigning different importance weights $\alpha$ to different people, the dynamics are more complex and new behaviours occur. In particular we consider the distances
\begin{enumerate}[label=\textbf{(D\arabic*)}]
    \item\label{it:p_v} $p_\alpha$-distance \eqref{eq:p_distance} with varying importance weights $\alpha_i$
    \item\label{it:p_f} $p_\alpha$-distance \eqref{eq:p_distance} with the same $\alpha_i = \alpha$ for each person
    \item\label{it:comp_eucl} Component-wise distance, i.e. in dimension $a\in\DD$ the distance between $x,y\in\mI^d$ is $|x_a-y_a|$, which corresponds to a Hegselmann-Krause model \cite{hegselmannkrause} in each dimension
    \item\label{it:eucl} Euclidean distance, i.e. the distance between $x,y\in\mI^d$ is $\sqrt{\sum_{a=1}^d|x_a-y_a|^2}$ as in \cite{fortunato_vector_2005}.
\end{enumerate}
We choose the initial distribution as in Example \ref{counterexample_standartdev}, i.e. 
    \begin{align}  
    \label{eq:ex1_new}
      f_0(x,\alpha)=&\frac{1}{30}\delta_{((-\frac{5}{6},1),(\frac{4}{5},\frac{1}{5}))}(x,\alpha)+\frac{1}{30}\delta_{((-1,-1),(\frac{1}{2},\frac{1}{2}))}(x,\alpha)+\frac{14}{15}\delta_{((1,-1),(\frac{1}{2},\frac{1}{2}))}(x,\alpha).  
\end{align}
If everyone has the same $\alpha$ (case \ref{it:p_f}) or if the distance does not depend on $\alpha$ as in case \ref{it:comp_eucl} and case \ref{it:eucl}, we use the initial condition
\begin{align}\label{eq:ex1_noalp}
    \rho_0(x)=&\frac{1}{30}\delta_{(-\frac{5}{6},1)}(x)+\frac{1}{30}\delta_{(-1,-1)}(x)+\frac{14}{15}\delta_{(1,-1)}(x).
\end{align}
Table \ref{tab:parameterchoice_difdist} lists all parameters used for the simulations.
\begin{table}[ht]
    \centering
    \begin{tabular}{c c c}
    \toprule
    \textbf{Parameter} & \textbf{Notation} & \textbf{Value} \\
      \midrule
        number of topics & $d$ & $2$ \\[2pt]
         ratio of current vs other topics & $\beta$ & $\frac{1}{2}$\\[2pt]
         lower bound for \eqref{eq:phismooth} & $r_1$ & $\frac{2}{5}$ \\[2pt]
         upper bound for \eqref{eq:phismooth} & $r_2$ & $\frac{1}{2}$ \\[2pt]
        final time & $T$ & $2500$ \\[2pt]
        scaling factor & $s$ & $25$ \\[2pt]
        number of particles & $N$ & $25$ \\
    \bottomrule    
    \end{tabular}
    \caption{Parameters used in Subsection \ref{sec:sim_ex1}}
    \label{tab:parameterchoice_difdist}
\end{table}
The outcomes of the simulations for the different distances are shown in Figure \ref{fig:difdis}. We can see in Figure \ref{fig:varalp} that, when using the $p_\alpha$-distance and people have different importance weights, \ref{it:p_v}, it is possible for some people to interact with people who do not interact with them, i.e. the interactions do not have to be symmetric. This dynamic is different to all the other cases we investigated. In the case \ref{it:p_f} we see in Figure \ref{fig:alp_0.8} and \ref{fig:alp_0.5} that whether or not the people with opinions $(-\frac{5}{6},1)$ and $(-1,-1)$ interact with each other depends on the value of $\alpha$. In particular, they interact with each other on the first topic if $\alpha=(\frac{4}{5},\frac{1}{5})$ and do not interact if $\alpha=(\frac{1}{2},\frac{1}{2})$. This is in contrast to the case in which people have different importance weights \ref{it:p_v}, in which interactions occur in all opinions or not at all. When using the Hegselmann-Krause model in two dimensions, i.e. case \ref{it:comp_eucl}, we see that similar to the case where $\alpha_1>>\alpha_2$, people with opinions $(-\frac{5}{6},1)$ and $(-1,-1)$ interact on the first topic with interactions being again reciprocal. Furthermore, in the case of \ref{it:comp_eucl} opinions on different topic do not influence the others. Therefore it is not possible to observe dynamics arising from the interplay between different topics. In Figure \ref{fig:eucli}, we used the Euclidean norm as a distance measure, i.e. case \ref{it:eucl}. We see that there are no interactions happening (for that choice of $\phi$). This is caused by the fact that opinions $(-\frac{5}{6},1)$ and $(-1,-1)$ are close in the first component, but not the second one. Note that a much larger interaction radius $r_1$ will lead to interactions. 
Again, in case \ref{it:eucl} people either interact in all opinions or do not interact at all.

\begin{figure}[!ht]  
  \begin{subfigure}[t]{\textwidth}
    \includegraphics[width=0.495\textwidth]{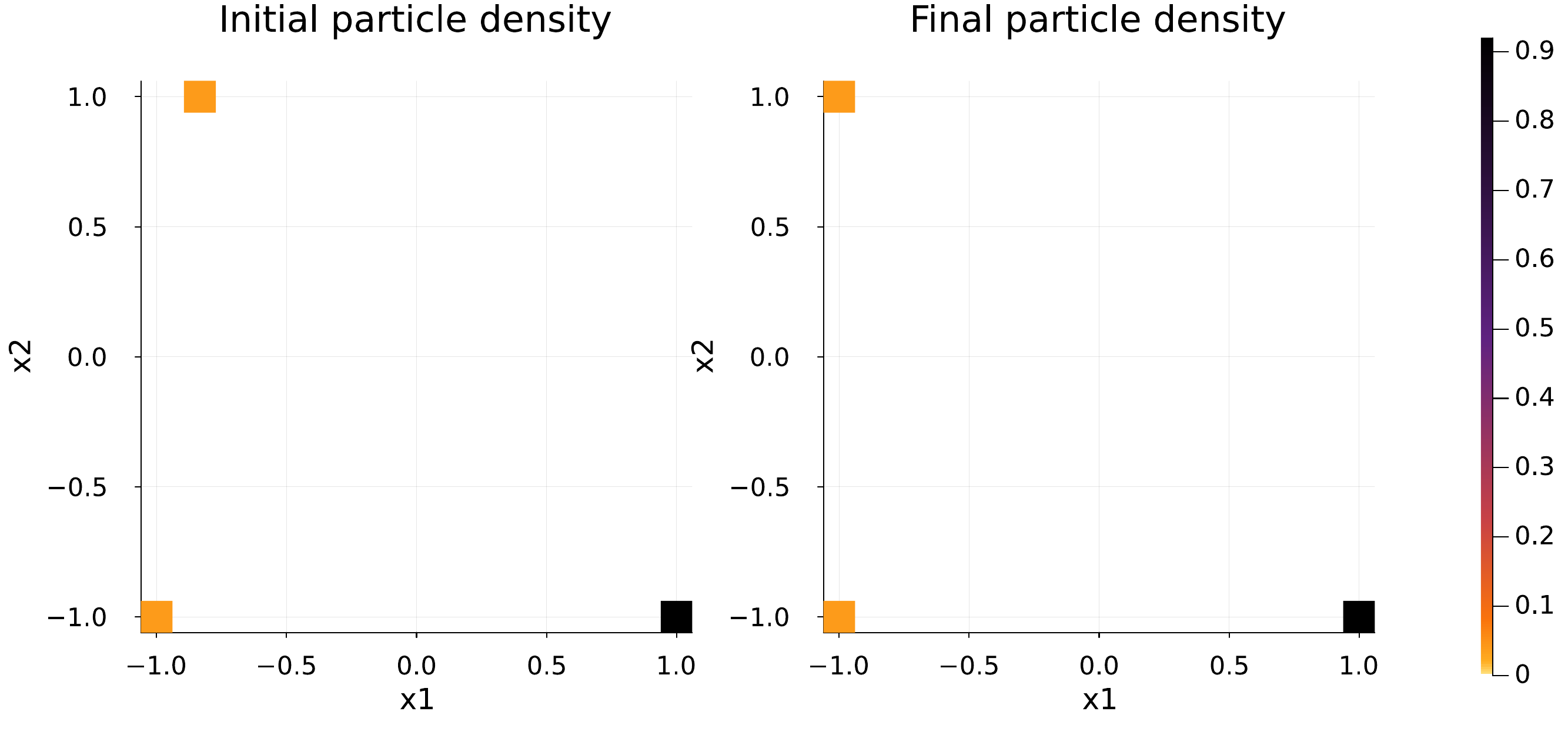}
    \includegraphics[width=0.495\textwidth]{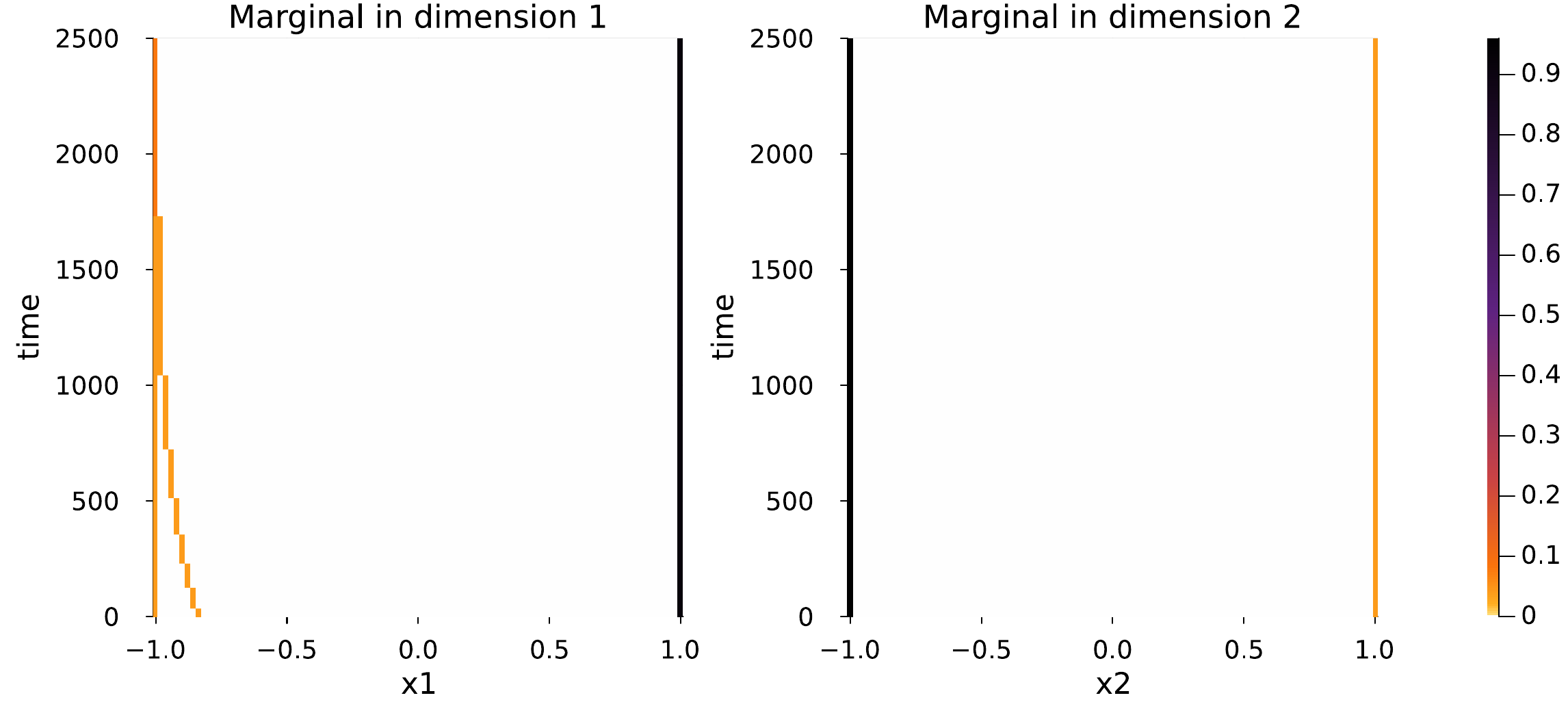}
    \caption{\ref{it:p_v} and \eqref{eq:ex1_new}}
    \label{fig:varalp}
  \end{subfigure}
    \begin{subfigure}[t]{\textwidth}
    \includegraphics[width=0.495\textwidth]{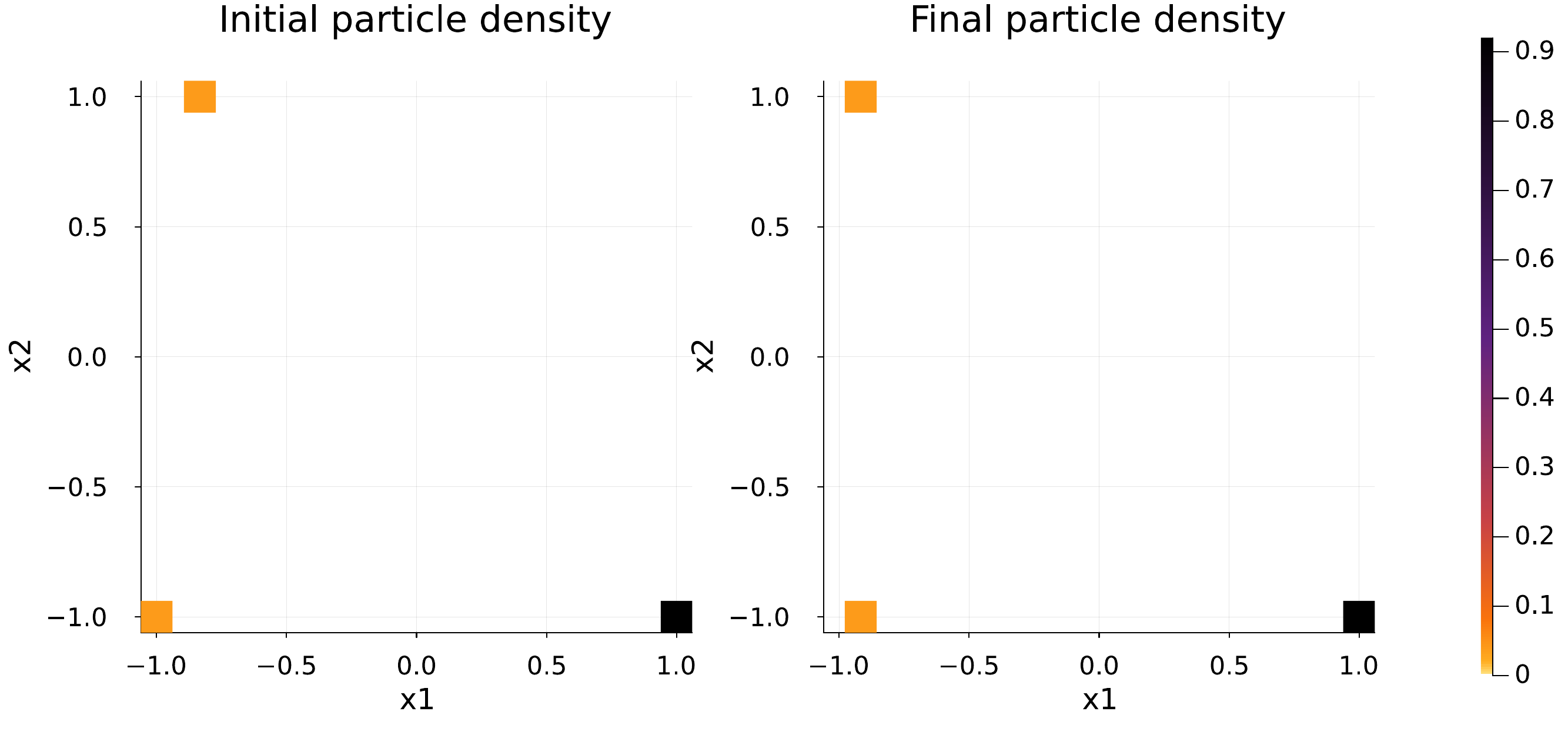}
    \includegraphics[width=0.495\textwidth]{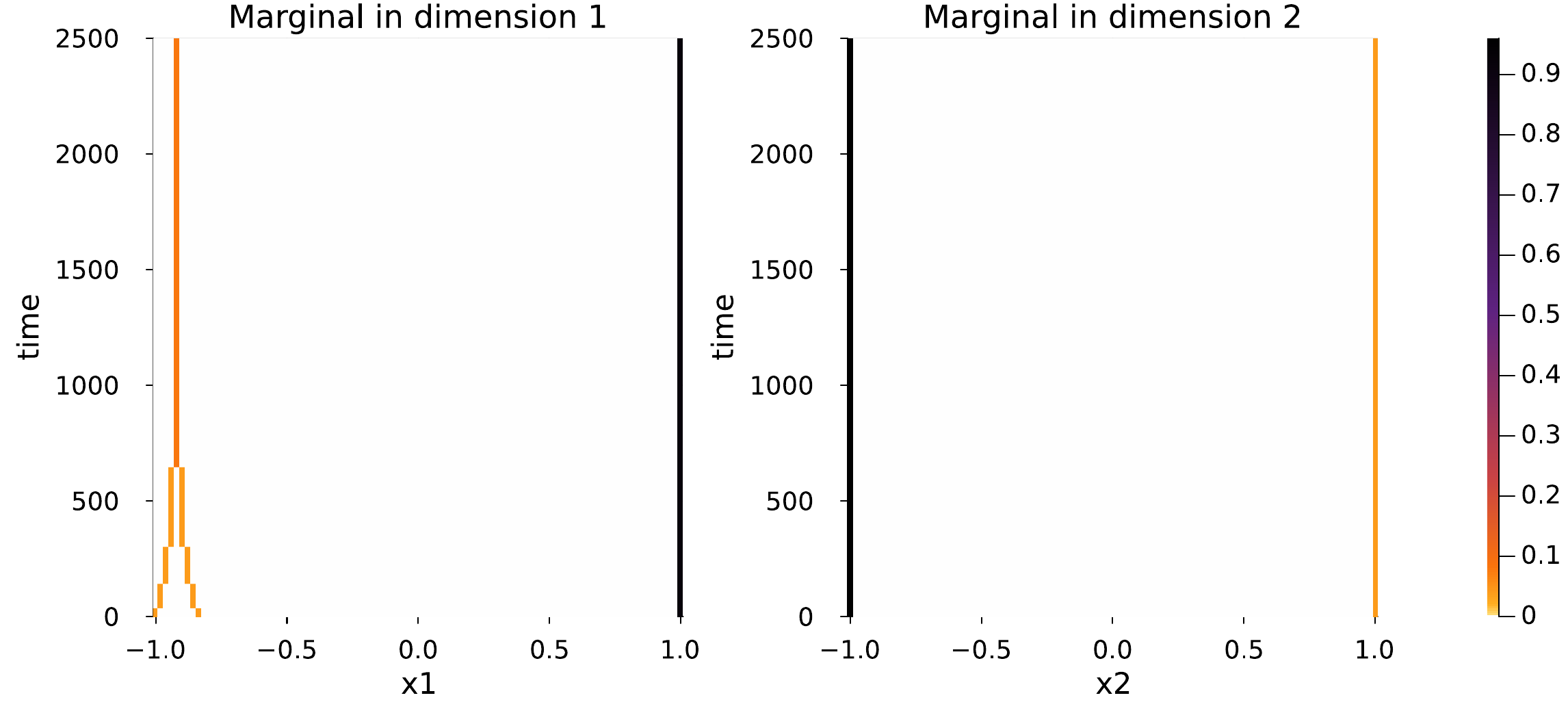}
    \caption{\ref{it:p_f} with $\alpha\equiv(\frac{4}{5},\frac{1}{5})$ and \eqref{eq:ex1_noalp}}
    \label{fig:alp_0.8}
  \end{subfigure}
    \begin{subfigure}[t]{\textwidth}
    \includegraphics[width=0.495\textwidth]{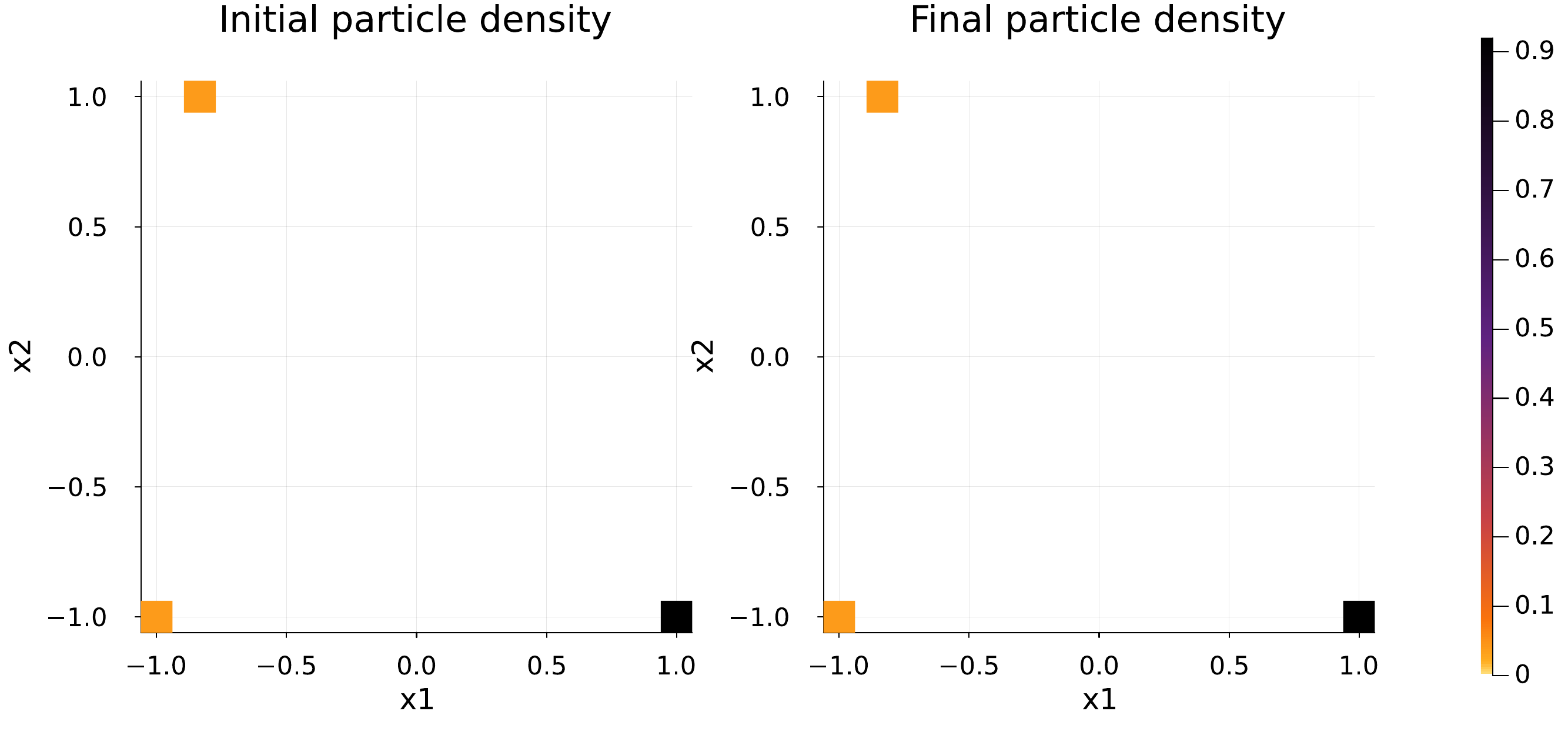}
    \includegraphics[width=0.495\textwidth]{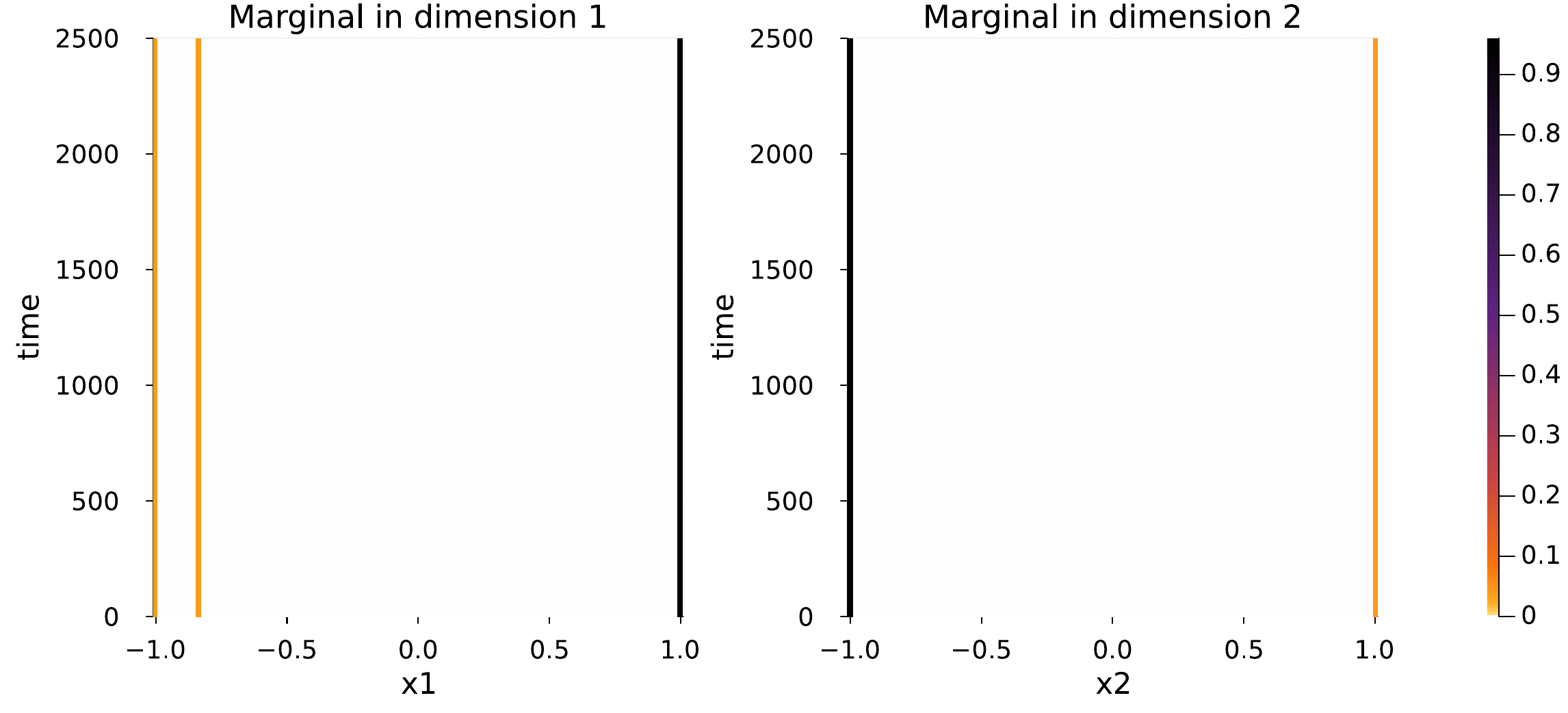}
     \caption{\ref{it:p_f} with $\alpha\equiv(\frac{1}{2},\frac{1}{2})$ and \eqref{eq:ex1_noalp}}
     \label{fig:alp_0.5}
  \end{subfigure}
    \begin{subfigure}[t]{\textwidth}
    \includegraphics[width=0.495\textwidth]{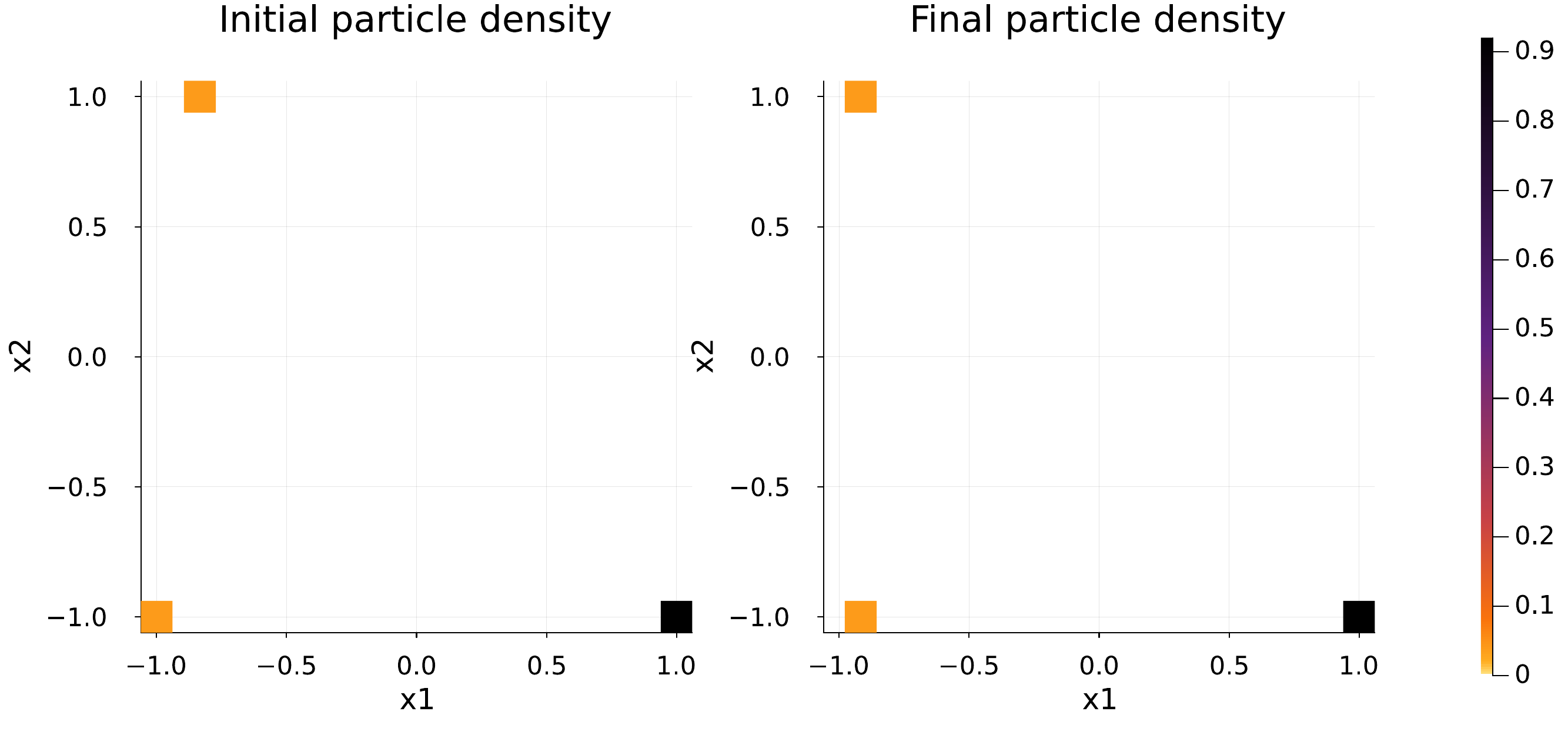}
    \includegraphics[width=0.495\textwidth]{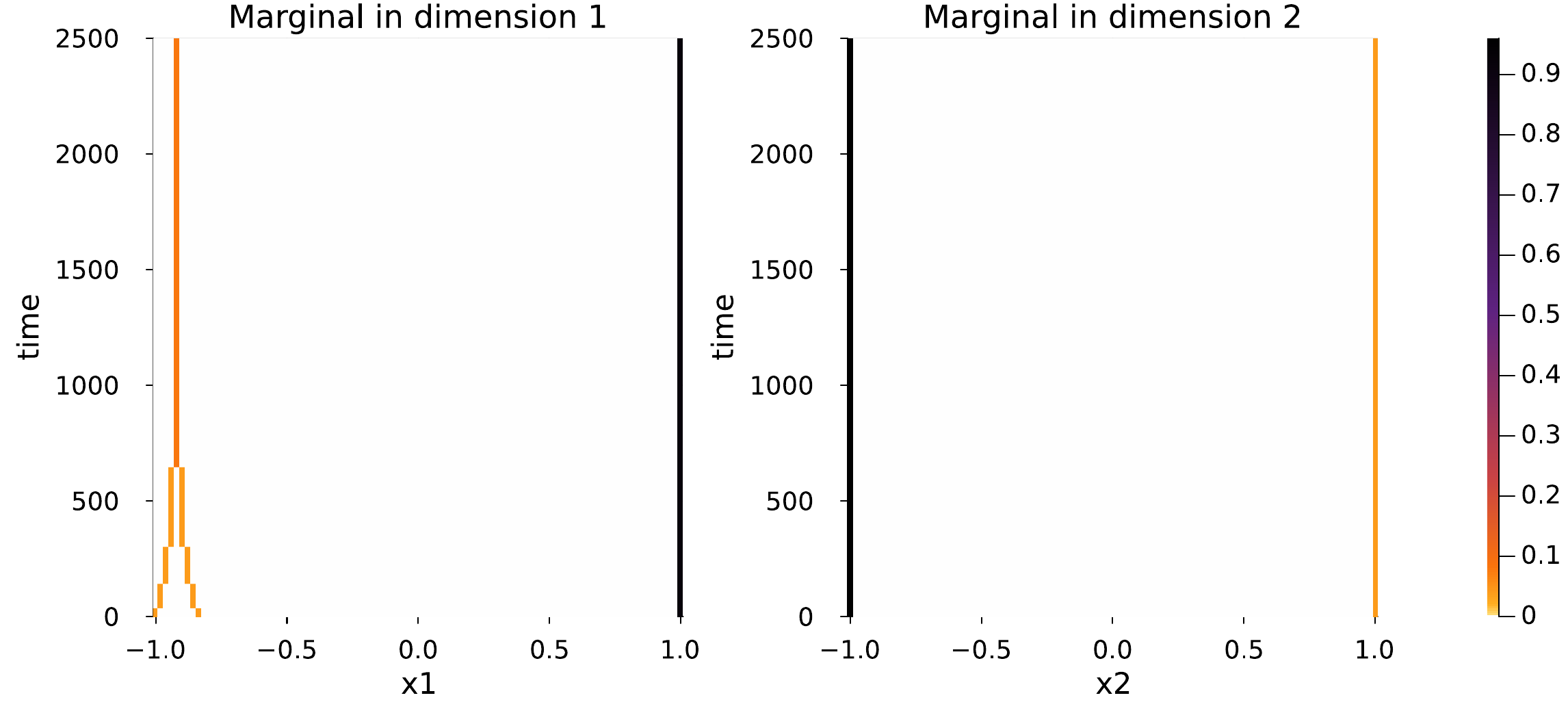}
    \caption{\ref{it:comp_eucl} and \eqref{eq:ex1_noalp}}
    \label{fig:hk}
    \end{subfigure}
    \begin{subfigure}[t]{\textwidth}
    \includegraphics[width=0.495\textwidth]{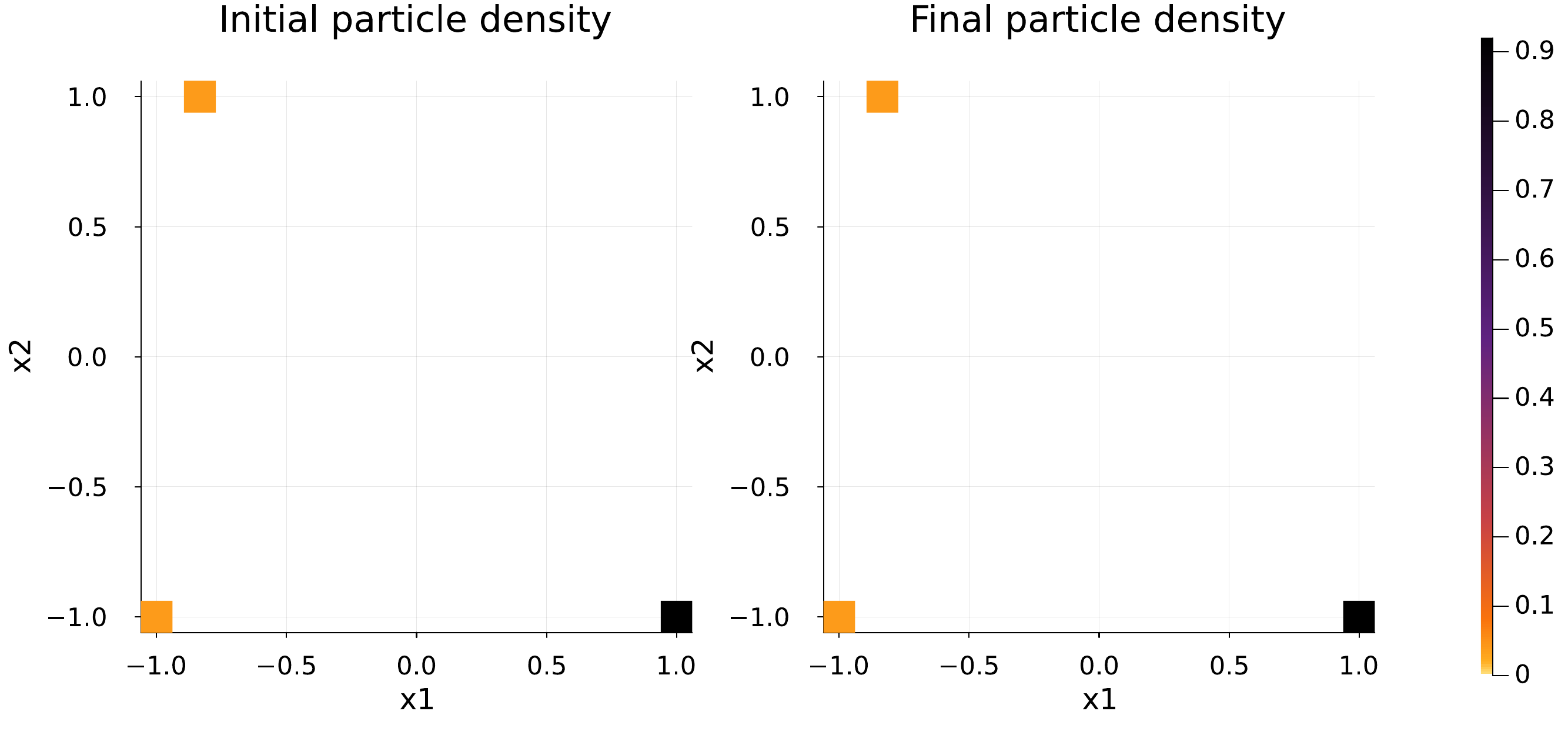}
    \includegraphics[width=0.495\textwidth]{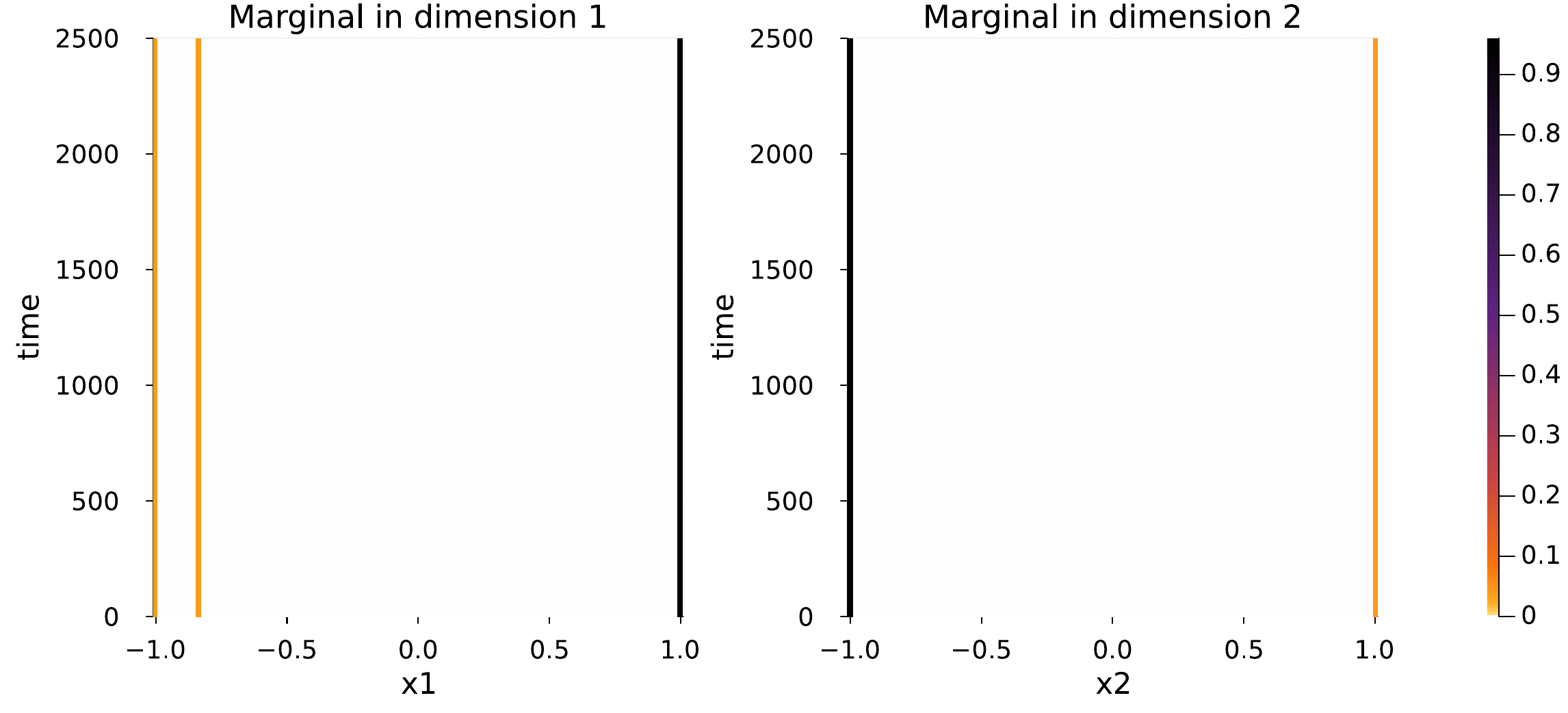}
    \caption{\ref{it:eucl} and \eqref{eq:ex1_noalp}}
    \label{fig:eucli}
  \end{subfigure}
\caption{Initial distribution given by \eqref{eq:ex1_new} and the corresponding stationary states illustrating the impact of different distances discussed in Subsection \ref{sec:sim_ex1}}
  \label{fig:difdis}
\end{figure}

\FloatBarrier
\subsection{From left-wing to right-wing}\label{sec:ltr} 

Let us now demonstrate another case that would not be possible to observe without considering the interplay between topics, and that demonstrates the effect of the choice of interaction radius $r_1$. For this we assume that most people have "right-wing" or "left-wing" opinions corresponding to $(\frac{3}{4},\frac{3}{4},\frac{3}{4})$ and $(-\frac{3}{4},-\frac{3}{4},-\frac{3}{4})$ respectively. Those people weigh all topics equally, i.e. $\alpha=(\frac{1}{3},\frac{1}{3},\frac{1}{3})$. We further assume that a few people have one "right-wing" and two "left-wing" opinions, $(\frac{3}{4},-\frac{3}{4},-\frac{3}{4})$, and $\alpha=(\frac{7}{9},\frac{1}{9},\frac{1}{9})$, which means that the first topic is significantly more important to them than the other two topics. We can write that as initial condition \begin{align}
    f_0(x,\alpha)=\frac{9}{20}\delta_{((\frac{3}{4},\frac{3}{4},\frac{3}{4}),(\frac{1}{3},\frac{1}{3},\frac{1}{3}))}(x,\alpha)+\frac{9}{20}\delta_{((-\frac{3}{4},-\frac{3}{4},-\frac{3}{4}),(\frac{1}{3},\frac{1}{3},\frac{1}{3}))}(x,\alpha)+\frac{1}{10}\delta_{((\frac{3}{4},-\frac{3}{4},-\frac{3}{4}),(\frac{7}{9},\frac{1}{9},\frac{1}{9}))}(x,\alpha).
\end{align}
In Table \ref{tab:parameterchoice_3d}, we display the parameter choices we used for the simulations.

\begin{table}[ht]
    \centering
    \begin{tabular}{c c c}
    \toprule
    \textbf{Parameter} & \textbf{Notation} & \textbf{Value} \\
      \midrule
        number of topics & $d$ & $3$ \\[2pt]
         ratio of current vs other topics & $\beta$ & $\frac{1}{2}$\\[2pt]
         lower bound for \eqref{eq:phismooth} & $r_1$ & $\frac{11}{12}$ \\[2pt]
         upper bound for \eqref{eq:phismooth} & $r_2$ & $r_1+0.0001$ \\[2pt]
        final time & $T$ & $700$ \\[2pt]
        scaling factor & $s$ & $25$ \\[2pt]
        number of particles & $N$ & $24$ \\
    \bottomrule    
    \end{tabular}
    \caption{Parameters used in Subsection \ref{sec:ltr}}
    \label{tab:parameterchoice_3d}
\end{table}

As we can see in Figure \ref{fig:initfin_3d_init_exact}, the people having one "right-wing" and two "left-wing" opinions at the beginning of the simulation, have three "right-wing" opinions at the end of the simulation which they share with the people who already had three "right-wing" opinions at the start. This is a behaviour that occurs because of the way we choose $\alpha$. Furthermore, it only happens because the opinions on different topic are related and people have different $\alpha$s.  This dynamic can also be seen in Figure \ref{fig:marginal_3d_init_exact} where the marginals in each opinion are plotted over time.
\begin{figure}[!ht]
  \centering
	\includegraphics[width=1\linewidth]{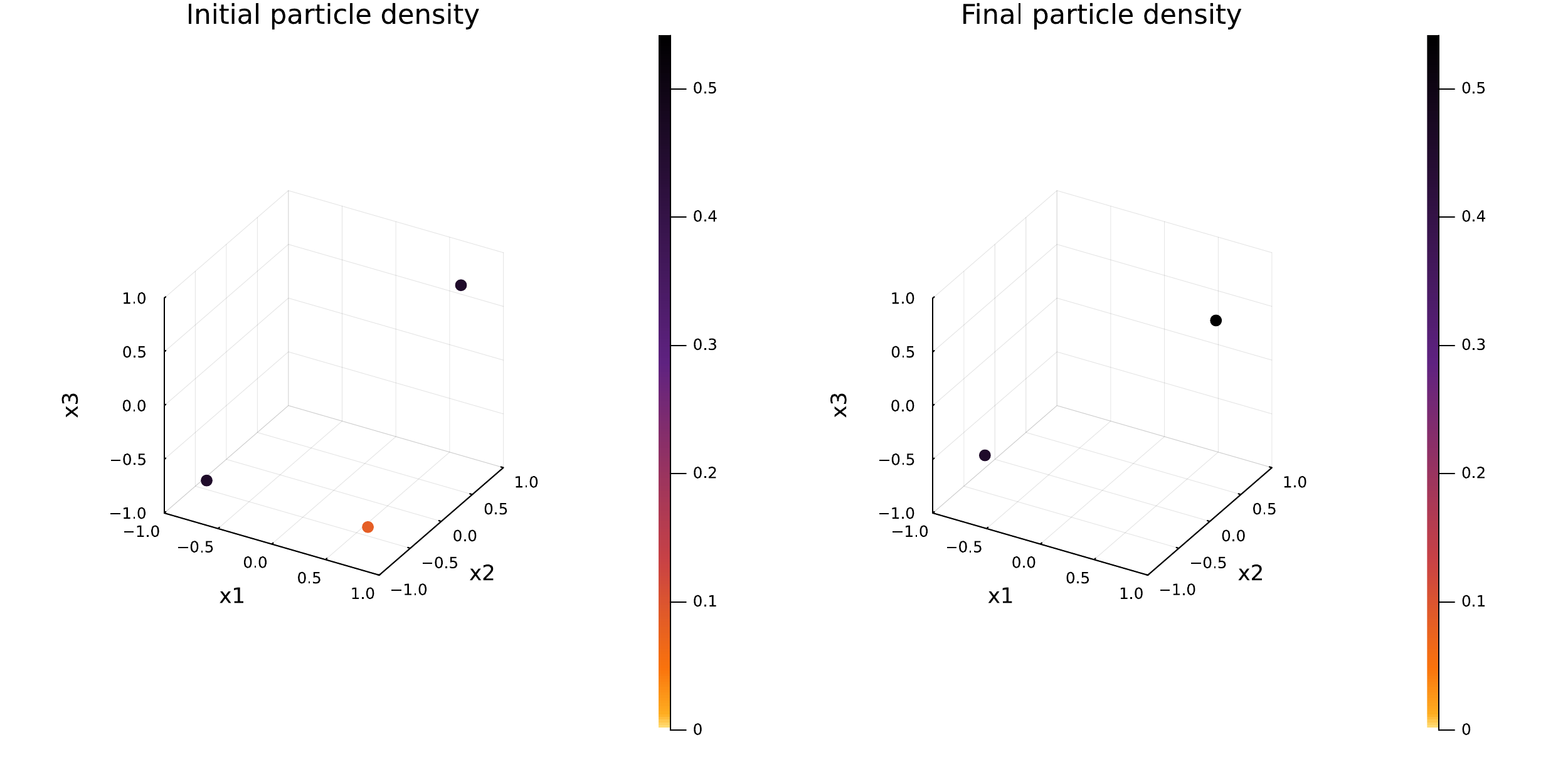}
    \caption{Initial and final particle density illustrating the swing from 'left' to 'right' discussed in Section \ref{sec:ltr}}
    \label{fig:initfin_3d_init_exact}
\end{figure}

The choice of the interaction radius, in particular $r_1$, plays a significant role regarding what behaviour can be observed. This can be seen in Figure \ref{fig:diff_r_1}, where at the final time step we can observe 3 clusters in Figure \ref{fig:marginal_3d_init_exact_nointer}, 2 clusters in Figure \ref{fig:marginal_3d_init_exact}, 2 clusters and consensus regarding the 2nd and 3rd topic in Figure \ref{fig:marginal_3d_init_exact_consensusin2top}, or consensus in Figure \ref{fig:marginal_3d_init_exact_consensus}, depending on the choice of $r_1$. This shows that, as we would expect, the bigger the interaction radius the more interactions are happening.

\begin{figure}[!ht]  
  \begin{subfigure}[t]{\textwidth}
    \includegraphics[width=\textwidth]{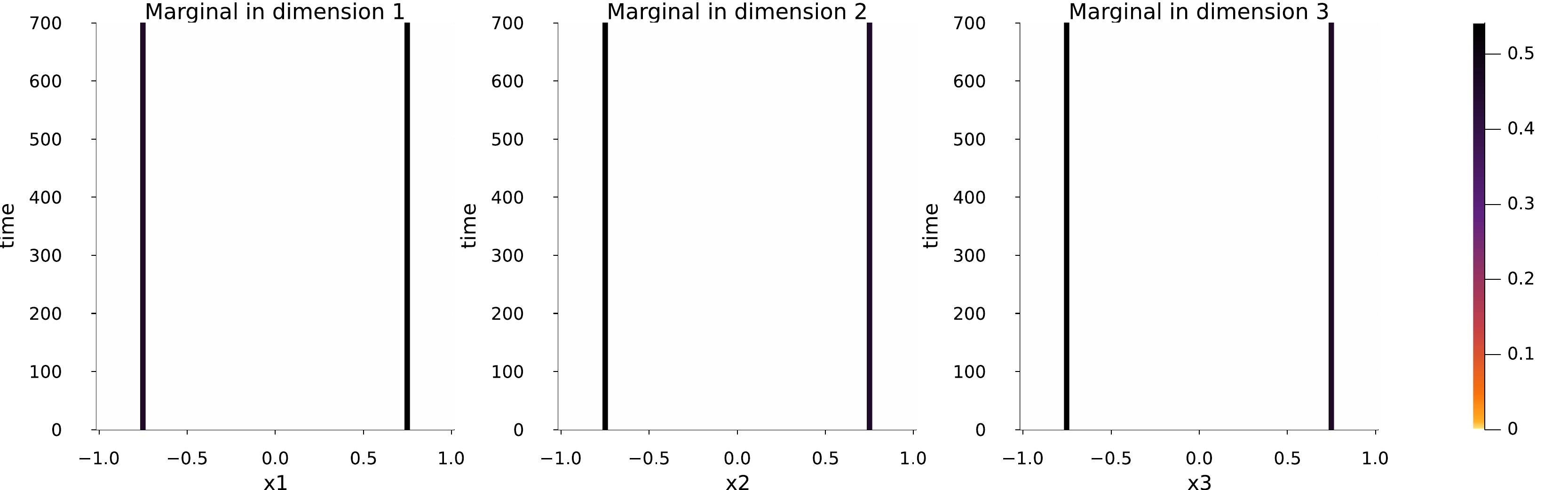}
    \caption{$r_1=0.9$}
    \label{fig:marginal_3d_init_exact_nointer}
  \end{subfigure}
    \begin{subfigure}[t]{\textwidth}
    \includegraphics[width=\textwidth]{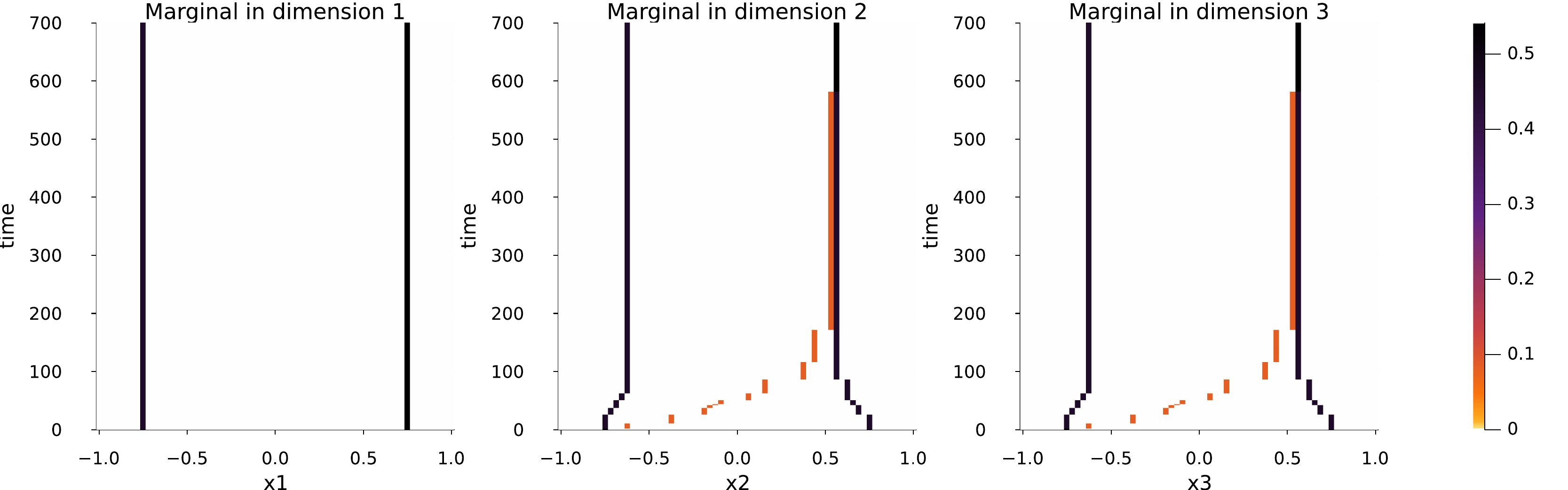}
    \caption{$r_1=\frac{11}{12}$}
    \label{fig:marginal_3d_init_exact}
  \end{subfigure}
  \begin{subfigure}[t]{\textwidth}
    \includegraphics[width=\textwidth]{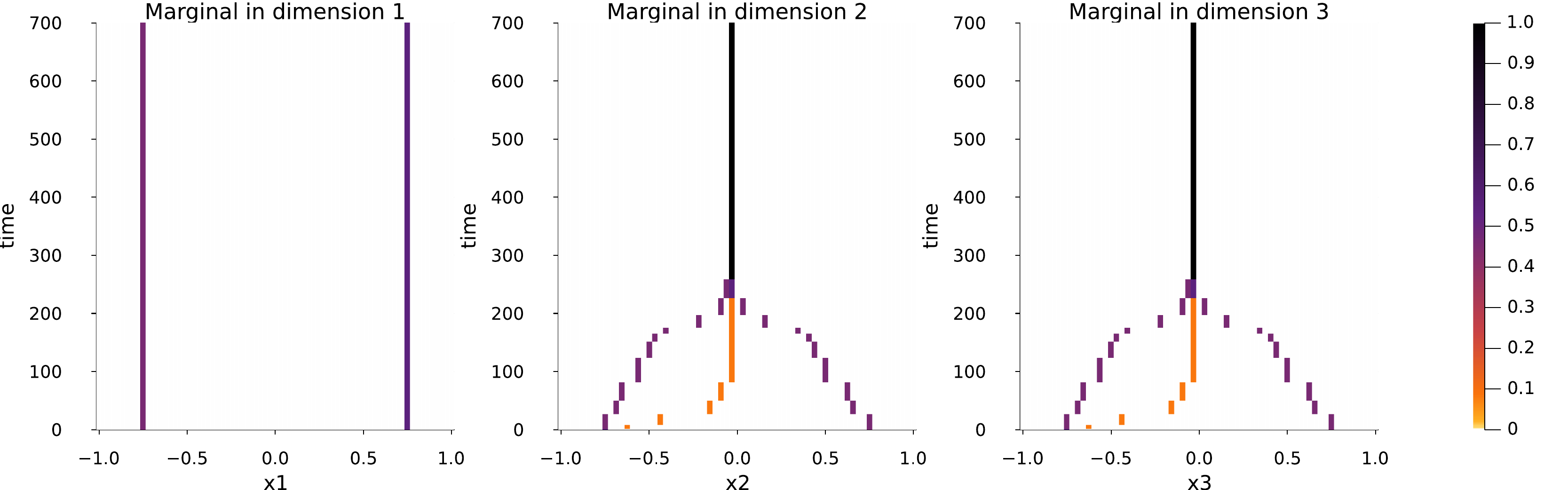}
    \caption{$r_1=0.99$}
    \label{fig:marginal_3d_init_exact_consensusin2top}
  \end{subfigure}
    \begin{subfigure}[t]{\textwidth}
    \includegraphics[width=\textwidth]{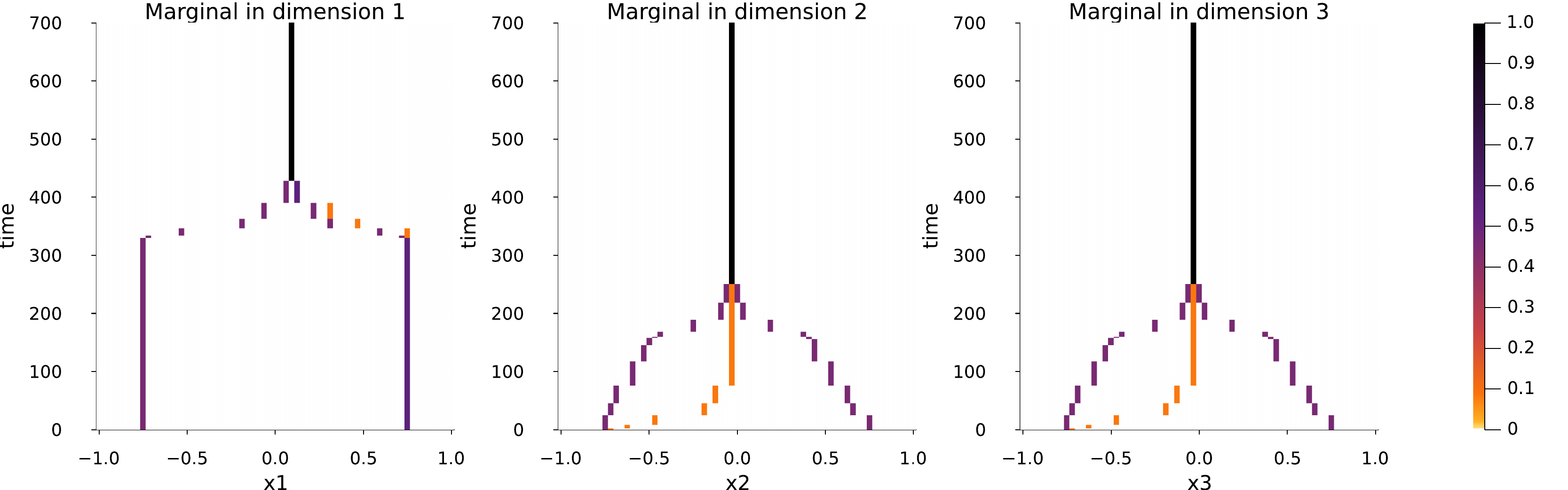}
    \caption{$r_1=1$}
    \label{fig:marginal_3d_init_exact_consensus}
  \end{subfigure}
    \caption{\change{Marginal plots of solutions to \eqref{eq:pde} for different values of $r_1$ showing the influence of the interaction radii on the number of clusters}}
    \label{fig:diff_r_1}
\end{figure}

\section{Conclusion and future work}\label{sec:conclusion}
In this paper, we introduced a model for multi-dimensional opinion dynamics for connected topics. People change their opinion on each topic, based on their distance in opinion - this distance  depends on individual importance weights of different topics. We first consider a kinetic formulation of the model, from which we derive the respective PDE in the mean field limit. Then we showed some analytic properties and convergence results for particular cases. Moreover, we demonstrated that due to the individual importance weights, the average opinion vector can change and the variance can increase. This dynamics can only be observed in case of individual important weights and differs from other proposed distances like the Euclidean distance. \\

\noindent Future work includes the convergence to steady state in case of different importance weights, as well as the full characterisation of stationary states. Another possible research direction corresponds to opinion control by influencing individual opinion weights.

\section{Acknowledgements}
HB acknowledges the support by the Deutsche Forschungsgemeinschaft (DFG, 
German Research Foundation) within the Research Training Group GRK 2583 
''Modeling, Simulation and Optimisation of Fluid Dynamic Applications''.
MB acknowledges support from DESY (Hamburg, Germany), a member of the Helmholtz Association HGF.

\printbibliography 

\appendix
\change{\section{Derivation of the mean-field PDE if people interact in a single topic per interaction}\label{app:onlyonetopicperinter}}

\change{In the following we present the derivation of the strong PDE if people only interact on a single topic per interaction. Let $\zeta$ be a discrete uniformly distributed random variable that takes values in $\{1,\dots,d\}$. Then we define the respective binary interaction between as
\begin{align}\label{eq:interactions_1topic}
\begin{split}
&\text{If }\zeta = a \text{ then }\\
& \qquad x_a^{*_a} = x_a + \gamma \phi(p_a(x,y;\alpha))(y_a-x_a)  \\
& \qquad y_a^{*_a} = y_a + \gamma \phi(p_a(y,x;\mu) (x_a-y_a)\\
&\text{For all other } b\in \DD\backslash\{a\}\\
& \qquad x_b^{*_a} = x_b, y_b^{*_a} = y_b
\end{split}
\end{align}
where the pre-interaction opinion and weights are given by $(x,\alpha), (y,\mu) \in \mR^d\times\Omega$. Again, the parameter $\gamma\in (0,1)$ states how strong the influence of interaction on the opinion is.

Next we wish to compute the respective mean-field limit. In doing so, we follow the same approach as in Section \ref{sec:derivation} and consider
\begin{align}\label{eq:dtglalp_1topic}
\frac{\partial f(x,\alpha,t)}{\partial t} = 
G(x,\alpha) - L(x,\alpha).
\end{align}

To define the gain and the loss term, we assume the same considerations as before to obtain

\begin{align*}
    G&(f,f)(x', \alpha,t) =\\
    &=\kappa \intiai \sum_{a=1}^d\int\limits_{\DD} \delta\left(w-a\right)\delta\left(x' - x^{*_a}(x,y,\alpha)\right) \mathds{1}_{\lbrace x\neq x^{*_a}\rbrace}f(x,\alpha,t) f(y, \eta, t)  \dd \zeta(w) \dd x \dd (y,\eta)\\
    &=\frac{\kappa}{d} \intiai \sum_{a=1}^d \delta\left(x' - x^{*_a}(x,y,\alpha)\right) \mathds{1}_{\lbrace x\neq x^{*_a}\rbrace}f(x,\alpha,t) f(y, \eta, t)   \dd x \dd (y,\eta).
\end{align*}
Here inequality means, that the two vectors differ in at least one component. Hence we exclude interactions which do not alter the original opinion through the indicator function. The first Dirac delta ensures interactions in the $a$-th component only, the second the modified binary interactions.

For the loss term, we obtain (using similar considerations as above)
\begin{align*}
    L(f,f) (x', \alpha,t) =& \kappa \intiai \sum_{a=1}^d\int\limits_{\DD} \delta\left(w-a\right)\delta\left(x' - x\right) \mathds{1}_{\lbrace x\neq x^{*_a}\rbrace }f(x,\alpha,t) f(y, \eta, t) \dd \zeta(w) \dd x \dd (y,\eta)\\
    =& \frac{\kappa}{d} \intiai \sum_{a=1}^d\delta\left(x' - x\right) \mathds{1}_{\lbrace x\neq x^{*_a}\rbrace}f(x,\alpha,t) f(y, \eta, t \dd x \dd (y,\eta).
\end{align*}

Like before, we see that the right hand side can be written as a collision operator in weak form. Let $\xi(x',\alpha)$ be a suitable test function, then 
\begin{align*}
\intia \Bigl(G(f,f) - L(f,f)\Bigr) \xi(x', \alpha)\dd(x', \alpha) = \frac{\kappa}{d}\intiaia\sum_{a=1}^d\Bigl(\xi(x^{*_a}, \alpha)-\xi(x,\alpha)\Bigr) f(x,\alpha,t) f(y, \eta, t) \dd (x,\alpha)\dd(y, \eta).
\end{align*}
Again, without loss of generality, we assume that $\kappa=1$.
To write \eqref{eq:dtgl} in weak form we consider a suitably chosen test function $\xi(x', \alpha)$, denote by $e_a$ the canonical basis vector and obtain
\begin{align*}
    \frac{\dd}{\dd t}&\intia \xi(x,\alpha) f(x,\alpha,t) \dd (x,\alpha)\\ =& \intia \xi(x,\alpha)G(x,\alpha) \dd (x,\alpha) - \intia \xi(x,\alpha) L(x,\alpha) \dd (x,\alpha)\\
    =&\frac{1}{d}\intiaia\sum_{a=1}^d\Bigl(\xi(x^{*_a}, \alpha)-\xi(x,\alpha)\Bigr) f(x,\alpha,t) f(y, \eta, t) \dd (x,\alpha)\dd(y, \eta)\\
    =&\frac{1}{d}\intiaia \sum_{a=1}^d\Bigl(\xi(x+\gamma\phi(p_a(x,y;\alpha))(y_a-x_a)e_a,\alpha)-\xi(x,\alpha)\Bigr) f(x,\alpha,t)f(y,\eta,t) \dd (y,\eta) \dd (x,\alpha) \\
    =&\frac{1}{d} \intiaia  \sum_{a=1}^d\Bigl(\xi(x,\alpha) +\gamma\frac{\dd \xi(x,\alpha)}{\dd x_a} (\phi(p_a(x,y;\alpha))(y_a-x_a)) + \mathcal{O}(\gamma^2)-\xi(x,\alpha)\Bigr) f(x,\alpha,t)f(y,\eta,t) \dd (y,\eta) \dd (x,\alpha)\\
    =& \frac{1}{d}\intiaia  \sum_{a=1}^d\Bigl(\gamma\frac{\dd \xi(x,\alpha)}{\dd x_a} (\phi(p_a(x,y;\alpha))(y_a-x_a)) + \mathcal{O}(\gamma^2)\Bigr) f(x,\alpha,t)f(y,\eta,t) \dd (y,\eta) \dd (x,\alpha)\\
    =& \frac{1}{d}\intiaia  \left(\gamma\nabla_x \xi(x,\alpha) \cdot \left(\phi\left(\left(
\begin{array}{c}
p_1(x,y;\alpha)\\
p_2(x,y;\alpha)\\
\vdots\\
p_d(x,y;\alpha)\\
\end{array}
\right)\right)\odot(y-x)\right) + \mathcal{O}(\gamma^2)\right) f(x,\alpha,t)f(y,\eta,t) \dd (y,\eta) \dd (x,\alpha)\\
=& \frac{1}{d}\intiaia  \Bigl(\gamma\nabla_x \xi(x,\alpha) \cdot \left(\phi_{xy\alpha}\odot(y-x)\right) + \mathcal{O}(\gamma^2)\Bigr) f(x,\alpha,t)f(y,\eta,t) \dd (y,\eta) \dd (x,\alpha).
\end{align*}
Note that we obtain the same equation as in \eqref{eq:derivation_withtaylor} except for the pre-factor $\frac{1}{d}$ on the right-hand side. We can thus follow the same steps as before to obtain the weak formulation
\begin{align*}
        \frac{\dd}{\dd t}\intia \xi(x,\alpha) f(x,\alpha,t) \dd (x,\alpha)
    =\frac{1}{d}\intiaia \nabla_x \xi(x,\alpha) \cdot (\phi_{xy\alpha}\odot (y-x)) f(x,\alpha,t)f(y,\eta,t) \dd (y,\eta) \dd (x,\alpha),
\end{align*}
and the strong formulation
\begin{align*}
    \frac{\partial}{\partial t} f(x,\alpha,t) = - \frac{1}{d}\nabla_x \cdot \left(\left(\intia  \phi_{xy\alpha}\odot (y-x) f(y,\eta,t)\dd (y,\eta)\right) f(x,\alpha,t)\right).
\end{align*} 
This shows that if the people talk only about one topic per interaction and the probability that they talk about a topic is the same for all topics, the dynamics are the same compared to the case where they talk about all topics at each conversation just the timescale is slower by the factor $\frac{1}{d}$.}
\end{document}